\newcommand{\prob}[1]{\mathbb{P}\left(#1 \right)}
\newcommand{\expect}[1]{\mathbb{E}\left[#1\right]}
\newcommand{\rounddown}[1]{\lfloor #1 \rfloor}
\newcommand{\roundup}[1]{\lceil #1 \rceil}
\DeclareMathOperator*{\argmin}{arg\,min}
\newcommand{\tO}[1] {\tilde{O} \left( #1 \right)}
\DeclareMathOperator{\opt}{OPT}
\DeclareMathOperator{\polylog}{polylog}
\newtheorem{theorem}{Theorem}
\newtheorem{lemma}[theorem]{Lemma}
\title{Constructing Decision Trees from Data Streams \footnote{Authors are listed in alphabetical order.}}
\author{Huy Pham \thanks{Hanoi University of Science and Technology. Email: \texttt{huypham261203@gmail.com}}, 
Hoang Ta \thanks{National University of Singapore, Singapore. Email: \texttt{hoang27@nus.edu.sg}. Supported by the NUS Presidential Young Professorship under Grant No. A-0010021-00-00.   Corresponding Author.}, 
Hoa T. Vu \thanks{San Diego State University, San Diego, CA, USA. Email: \texttt{hvu2@sdsu.edu}. Supported by the National Science Foundation under Grant No. 2342527. Corresponding Author.}
}
\date{}
\begin{document}

\maketitle

\begin{abstract}
In this work, we present data stream algorithms to compute optimal splits for decision tree learning. In particular, given a data stream of observations \(x_i\) and their corresponding labels \(y_i\), without the i.i.d. assumption, the objective is to identify the optimal split \(j\) that partitions the data into two sets, minimizing the mean squared error (for regression) or the misclassification rate and Gini impurity (for classification). We propose several efficient streaming algorithms that require sublinear space and use a small number of passes to solve these problems. These algorithms can also be extended to the MapReduce model. Our results, while not directly comparable, complements the seminal work of Domingos-Hulten (KDD 2000) and Hulten-Spencer-Domingos (KDD 2001).  
\end{abstract}

\section{Introduction}
Decision trees are one of the most popular machine learning models. They serve as the base model for many ensemble methods such as gradient boosting machines \cite{MasonBBF99, Friedman2002}, random forest \cite{Ho95}, AdaBoost \cite{FreundS97}, XGBoost \cite{ChenG16}, LightGBM \cite{KeMFWCMYL17}, and CatBoost \cite{ProkhorenkovaGV18}. These methods are extremely powerful; they yield state-of-the-art results in many machine learning tasks, especially for tabular data \cite{GrinsztajnOV22, Shwartz-ZivA22}.

In the CART \cite{BreimanFOS84} and ID3 \cite{Quinlan93} algorithms, a decision tree is often built in a top-down manner. The tree is grown recursively by splitting the data into two parts at each node depending on different criteria such as Gini impurity, misclassification rate, or entropy for classification and mean squared error for regression.

A key step in building a decision tree is to find optimal splits. Denote $ [N]$ as the set $\{1, 2, \ldots, N\}$. The data consists of observations $ x_1, x_2, \ldots, x_m \in [N] $ and their real-valued labels $ y_1, y_2, \ldots, y_m \in [0, M] $ (for regression) or binary labels $ y_1, y_2, \ldots, y_m \in \{-1, +1\} $ (for classification). Informally, we want to find the optimal split $ j \in [N] $ to partition the data into two sets, $ \{x_i \leq j\} $ and $ \{x_i > j\} $, such that the mean squared error (MSE) for regression or the misclassification rate for classification is minimized.

The set $[N]$ can be thought of as a discretization of the possible values of $ x_i $, which is almost always the case in practice, as computers have finite precision. In fact, the values of observations are often discretized into a finite number of intervals \cite{BreimanFOS84, ChengFIQ88, Quinlan93, HongLK06, DoughertyKS95, FayyadI92}. However, the number of discretized values $ N $ can still be very large, as observed in \cite{FayyadI92}. This is due to the fact that sometimes an attribute is a combination of multiple attributes, or an attribute is a categorical variable with high cardinality. This affects both the memory and time complexity.

In the case of regression, we assume that $ y_i \in [0,M] $ and each $ y_i $ has finite precision that can be stored in one machine word of size $ O(\log N) $ bits.

In many applications, the data is too large to fit into the main memory of a single machine. Often, the data is streamed from disk or over a network while using sublinear space, ideally in a single pass. In some cases, a small number of passes is allowed.

Another computational model of interest is the massively parallel computation (MPC) model. In this model, we have a large number of machines with sublinear memory. In each synchronous round, each machine can communicate with other machines and perform local computations. The MPC model is an abstraction of many distributed computing systems such as MapReduce and Spark \cite{KarloffSV10}.

We first consider a very simple problem of finding the optimal split which is the building block of decision tree learning.  This corresponds to building a depth-1 tree. In the case of classification, this is a decision stump which is often used as a weak learner for AdaBoost \cite{FreundS97}. We can repeat this algorithm to build a deeper tree. We consider the case in which each observation $ x_i $ represents a {\bf single attribute}. Extending this to multiple attributes is straightforward, which we will discuss shortly.

The following formulation of finding the optimal split follows the standard approach from \cite{BreimanFOS84} and \cite{Quinlan93}. For a more recent overview, we refer to chapter 9 of \cite{HastieTF09}. In this work, we consider streaming and massively parallel computation settings.

\paragraph{Compute the optimal split for regression.} Consider a stream of observations $ x_1, x_2, \ldots, x_m \in [N] = \{1,2,\ldots,N\} $ along with their labels $ y_1, y_2, \ldots, y_m \in [0, M] $. Recall that for simplicity (i.e., to avoid introducing another parameter), we assume each $ y_i $ can be stored in one machine word of size $ O(\log N) $ bits.

Let $ [ E ] $ denote the indicator variable for the event $ E $. Furthermore, let $D$ be the number of distinct values of $\{ x_1, x_2, \ldots, x_m \}$. An important primitive to build a regression tree is to find the optimal split $ j \in [N] $ that minimizes the following quantity:
\begin{align*}
    \sum_{i=1}^m [x_i \leq j] (y_i - \mu(j))^2 + \sum_{i=1}^m [x_i > j] (y_i - \gamma(j))^2,
\end{align*}

where $ \mu(j) $ and $ \gamma(j) $ are the mean of the labels of $ x_i $ in the range $[1,2,\ldots,j]$ and $[j+1,j+2,\ldots,N]$, respectively. More formally,
\begin{align*}
	\mu(j) := \frac{\sum_{i=1}^m [x_i \leq j] y_i}{\sum_{i=1}^m [x_i \leq j]}, &  & \gamma(j) := \frac{\sum_{i=1}^m [x_i > j] y_i}{\sum_{i=1}^m [x_i > j]}.
\end{align*}

That is the predicted value for $ x_i $ is $ \mu(j) $ if $ x_i \leq j $ and $ \gamma(j) $ otherwise. This serves as the optimal regression rule for the split at $ j $ based on mean squared error. The loss function based on the mean squared error (MSE) is defined as follows:
\[
	L(j) = \frac{1}{m}\left( \sum_{i=1}^m [x_i \leq j] (y_i - \mu(j))^2 + \sum_{i=1}^m [x_i > j] (y_i - \gamma(j))^2 \right).
\]

Our goal is to find a split $ j $ that minimizes or approximately minimizes the mean squared error $ L(j) $.  We use the notation:
\[
	\opt  = \min_{j \in [N]} L(j) .
\]

In the offline model, we can store the entire data using $\tO{m}$\footnote{We use $\tO{}$ to suppress $\polylog{  N}$ or $\polylog m$ factors whenever convenient. In this paper, the suppressed factors have low orders such as 1 or 2 and are explicit in the proofs.} space and then find the optimal split $ j $ in $ \tO{m} $ time. Our goal is to remedy the memory footprint and to use sublinear space $ o(m) $ and $o(N)$ in the streaming model. 

\begin{figure}
	\begin{tikzpicture}[scale=0.4]
		\draw[thick,->] (0,0) -- (11,0) node[anchor=north west] {x};
		\draw[thick,->] (0,0) -- (0,11) node[anchor=south east] {y};
		\foreach \x in {1,...,10} {
				\draw (\x,1pt) -- (\x,-3pt) node[anchor=north] {\x};
				\draw (1pt,\x) -- (-3pt,\x) node[anchor=east] {\x};
			}
		\fill (1,6) circle (3pt);
		\fill (1,7) circle (3pt);
		\fill (2,8) circle (3pt);
		\fill (3,8) circle (3pt);
		\fill (3,6) circle (3pt);
		\fill (2,7) circle (3pt);
		\fill (4,7) circle (3pt);
		\fill (5,2) circle (3pt);
		\fill (6,2) circle (3pt);
		\fill (6,1) circle (3pt);
		\fill (7,3) circle (3pt);
		\fill (8,2) circle (3pt);
		\fill (9,3) circle (3pt);
		\fill (9,1) circle (3pt);

		\draw[solid] (4,0) -- (4,10);
		\draw[dotted] (0,7) -- (4,7);
		\draw[dotted] (4,2) -- (10,2);
	\end{tikzpicture}
	\hfil
	\begin{tikzpicture}[scale=0.4]
		\draw[thick,->] (0,0) -- (11,0) node[anchor=north west] {x};
		\foreach \x in {1,...,10} {
				\draw (\x,1pt) -- (\x,-3pt) node[anchor=north] {\x};
			}
		\fill (1,1) circle (5pt);
		\draw (1,4) circle [radius=5pt, fill=white, draw=black];
		\fill (2,1) circle (5pt);
		\fill (3,1) circle (5pt);
		\fill (3,1) circle (5pt);
		\draw (2,4) circle [radius=5pt, fill=white, draw=black];
		\fill (4,1) circle (5pt);
		\draw (5,4) circle [radius=5pt, fill=white, draw=black];
		\draw (6,4) circle [radius=5pt, fill=white, draw=black];
		\draw (7,4) circle [radius=5pt, fill=white, draw=black];
		\draw (8,4) circle [radius=5pt, fill=white, draw=black];
		\draw (9,4) circle [radius=5pt, fill=white, draw=black];
		\fill (9,1) circle (5pt);

		\draw[solid] (4,0) -- (4,10);
	\end{tikzpicture}
	\caption{The left figure is an example of regression. The optimal split is $ j = 4 $ which minimizes the mean squared error. The right figure is an example of classification. The optimal split is $ j = 4 $ which minimizes the misclassification rate.}
\end{figure}

Our first results are the following streaming algorithms for regression.

\begin{theorem}[Main Result 1]\label{thm:main-regression}
	For regression, we have the following algorithms:
	\begin{enumerate}
		\item A 1-pass algorithm that uses $ \tO{D} $ space, $ O(1) $ update time, and $ O(D) $ post-processing time that computes the optimal split. $D$ is the number of distinct values of $x_1, x_2, \ldots, x_m$.
		
		\item A 2-pass algorithm that with high probability \footnote{We make a convention that a high probability is at least $ 1- 1/N^{\Omega(1)}$ or $1-1/m^{\Omega(1)}$.} has the followings properties. It uses $ \tO{1/\epsilon} $
		space, $ \tO{1} $ update time, and $ \tO{1/\epsilon} $ post-processing time. Furthermore, it computes a split $ j $ such that $L(j) \leq \opt + \epsilon$. 
		
		\item For any $\beta \in (0,1)$, an $O( 1/\beta)$-pass algorithm  that uses $\tO{1/\epsilon^2 \cdot N^\beta}$ space , $\tO{1/\epsilon^2}$  update time and  post-processing time. It computes a split $ j $ such that $L(j) \leq (1+\epsilon)\opt$.  By setting $\beta = 1/\log N$, this implies an $O(\log N)$-pass algorithm that uses $ \tO{1/\epsilon^2} $ space, $ \tO{1/\epsilon^2} $ update time and post processing time. It computes a split $ j $ such that $L(j) \leq (1+\epsilon)\opt$.
		
	\end{enumerate}
\end{theorem}

\paragraph{Compute the optimal split for classification.} Consider a stream of observations $ x_1, x_2, \ldots, x_m \in [N] $ along with their binary labels $ y_1, y_2, \ldots, y_m \in \{-1, +1\} $. Our goal is to find a split $ j $ that minimizes the  misclassification rate. For any $R \subseteq [N]$ and label $u \in \{-1, +1\}$, we use $f_{u,R}$ to denote the count of the observations with label $u$ in the range $R$. That is
\[
	f_{u,R} := \sum_{i=1}^m [x_i \in R] [y_i = u].	
\]

If we split the observations into two sets $[1,j]$ and $[j+1,N]$, we classify the observations in each set based on the majority label of that set. Then, the  misclassification rate for a given $j$ can be formally defined as follows. 
\begin{align*}
	L(j) & := \frac{1}{m} \cdot \left(\min \{f_{-1, [1,j]}, f_{+1, [1,j]}\} + \min \{f_{-1, (j,N]}, f_{+1, (j,N]}\} \right).
\end{align*}

In a similar manner, we use $\opt$ to denote the smallest possible misclassification rate over all possible $j$. That is $\opt = \min_{j \in [N]} L(j)$. The task is to find a split $j$ that approximately minimizes the misclassification rate, i.e., $L(j) \approx \opt.$

Beside the misclassification rate, another popular loss function is based on the Gini impurity.  Given a (multi)set of observations and labels $S = \{x_i, y_i\}_{i=1}^m$, the Gini impurity is defined as
\[
	\text{Gini}(S) := 1 - \sum_{l \in \{-1,+1\}} \left( \frac{\sum_{i=1}^m [y_i = l]}{m} \right)^2.
\]

Note that when all labels are the same, the Gini impurity is 0. For a given split point $j$, let $L = \{(x_i, y_i) : x_i \leq j\}$ and $R = \{(x_i, y_i) : x_i > j\}$. The loss function based on the Gini impurity is defined as follows:

\[
	L_{\text{Gini}}(j)  = \frac{|L|}{m} \text{Gini}(L) + \frac{|R|}{m} \text{Gini}(R).
\]

Our results for classification are as follows.

\begin{theorem}[Main Result 2] \label{thm:main-classification}
	For classification with numerical observations, we have the following algorithms:
	\begin{enumerate}
		\item A 1-pass algorithm that with high probability satisfies the following properties. It uses $ \tO{1/\epsilon}  $ space, $ O(1) $ update time, and $ \tO {1/\epsilon}$ post-processing time. It outputs a split $ j $ such that $ L(j) \leq \opt + \epsilon $.

		\item For any $\beta \in (0,1)$, an $O( 1/\beta)$-pass algorithm that uses $\tO{1/\epsilon^2 \cdot N^\beta}$ space , $\tO{1/\epsilon^2}$  update time and  post-processing time. It computes a split $ j $ such that $L(j) \leq (1+\epsilon)\opt$. By setting $\beta = 1/\log N$, this implies an $O(\log N)$-pass algorithm that uses $ \tO{1/\epsilon^2} $ space, $ \tO{1/\epsilon^2} $ update time and post processing time. It computes a split $ j $ such  that the loss based on Gini impurity$L(j) \leq (1+\epsilon)\opt$.
		
		\item A 1-pass algorithm that with high probability satisfies the following properties. It uses $ \tO{1/\epsilon^2}  $ space, $ O(1) $ update time, and $ \tO {1/\epsilon^2}$ post-processing time. The algorithm outputs a split $ j $ such that the loss based on Gini impurity $ L_{\text{Gini}}(j) \leq \opt + \epsilon $.
	\end{enumerate}
\end{theorem}

\paragraph{Compute the optimal split for classification with categorical observations.} For categorical observations, we again assume that the observations $ x_1, x_2, \ldots, x_m \in [N]$. However, we can think of $[N]$ as not having a natural ordering (such as locations, colors, etc.).

The goal is to compute a partition of $[N]$ into two disjoint sets $A$ and $B$, denoted by $A \sqcup B = [N]$, such that the misclassification rate is minimized. The misclassification rate is defined as follows:
\begin{align*}
	L(A,B) & := \frac{1}{m} \cdot \left (\min \{f_{-1, A}, f_{+1, A}\} + \min \{f_{-1, B}, f_{+1, B}\} \right).
\end{align*}

Let $\opt$ be the smallest possible misclassification rate over all possible partitions $A$ and $B$. That is $\opt = \min_{A,B:A \sqcup B = [N]} L(A,B)$. We will show the following.

\begin{theorem}[Main Result 3] \label{thm:main-classification-categorical}
	For the classification problem with categorical observations, we have the following results:
	\begin{enumerate}
		\item A 1-pass algorithm that finds a partition $[N] = A \sqcup B$ such that $L(A,B) \leq \opt + \epsilon$ using $\tO{N/\epsilon}$ memory, $O(1)$ update time, and $O(2^N)$ post processing time.
		\item Any constant-pass algorithm that decides if $\opt = 0$ requires $\Omega(N)$ space.
	\end{enumerate}
\end{theorem}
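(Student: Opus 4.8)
For Part~1, the plan rests on the observation that the objective decomposes over the distinct observation values. For $v\in[N]$ let $p_v$ and $n_v$ be the number of observations with $x_i=v$ and label $+1$, respectively $-1$; then $f_{+1,A}=\sum_{v\in A}p_v$ and $f_{-1,A}=\sum_{v\in A}n_v$, and likewise for $B$. In the single pass I would maintain a hash table keyed by the distinct values seen so far, storing the exact pair $(p_v,n_v)$ for each; there are at most $\min(m,N)$ keys and each entry takes $O(\log m)$ bits, so the space is $O(N\log m)=\tO{N}\subseteq\tO{N/\epsilon}$ and each update is $O(1)$. In post-processing I would enumerate all $2^N$ partitions $[N]=A\sqcup B$ in Gray-code order, maintaining the running sums $\sum_{v\in A}p_v$ and $\sum_{v\in A}n_v$ (each moves by a single $p_v$ or $n_v$ between consecutive subsets) together with the global totals, evaluate $L(A,B)$ in $O(1)$ per subset, and output the best one. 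This returns an \emph{exact} minimizer, hence in particular a partition with $L(A,B)\le\opt+\epsilon$, deterministically and within the stated bounds. The $1/\epsilon$ slack in the space bound is not actually needed here; it would, however, let one swap the exact counters for $(1\pm\epsilon)$-approximate counters of $O(\log\log m+\log(1/\epsilon))$ bits if one wanted space independent of $\log m$, at the cost of randomization and a $(1+\epsilon)$-factor, via a routine uniform-approximation argument over the $2^N$ partitions. (One could also skip the brute force, since $\opt=\tfrac1m\sum_v\min(p_v,n_v)$ is attained by sending each $v$ to the side matching its own majority label, but $O(2^N)$ post-processing is all the statement asks for.)

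For Part~2, the plan is a reduction from two-party \textsc{Set-Disjointness}: Alice holds $a\in\{0,1\}^N$, Bob holds $b\in\{0,1\}^N$, and they must decide whether $a_v=b_v=1$ for some $v$, which is known to require $\Omega(N)$ bits of randomized bounded-error communication even with unboundedly many rounds; moreover the hard instances can be taken with $\|a\|_1=\|b\|_1=\Theta(N)$, so the stream below is nonempty. From such an instance form the stream in which Alice emits, at the front, one observation $(x=v,\,y=+1)$ for every $v$ with $a_v=1$, and Bob then emits one observation $(x=v,\,y=-1)$ for every $v$ with $b_v=1$. For each value $v$ this gives $\min(p_v,n_v)=[a_v=1]\cdot[b_v=1]$, so $m\cdot\opt=\sum_v[a_v=b_v=1]$ and hence $\opt=0$ exactly when $a,b$ are disjoint. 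A $P$-pass, $S$-space algorithm deciding $\opt=0$ yields a protocol in which Alice and Bob alternately run the algorithm on their halves of the stream and pass along the $S$-bit memory state, using $O(P)$ messages and $O(PS)$ total communication; therefore $PS=\Omega(N)$, i.e.\ $S=\Omega(N)$ whenever $P=O(1)$.

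The upper bound is essentially immediate once the per-value decomposition is noticed, so I do not expect a real obstacle there. For the lower bound the two points to get right are (i) that the reduction produces a syntactically valid, bounded-length stream ($m=\|a\|_1+\|b\|_1\le 2N$) with $\opt$ characterized exactly by intersection, and (ii) that the pass-to-round simulation keeps the message count proportional to the number of passes, so that a constant-pass streaming algorithm becomes a constant-round --- hence $\Omega(N)$-communication --- \textsc{Disjointness} protocol; both are standard. The real content of the theorem is the qualitative separation: unlike the ordered/numerical case, which admitted $\tO{1/\epsilon}$-space algorithms, categorical splitting provably cannot be done in $o(N)$ space.
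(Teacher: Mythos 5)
Your proof is correct. Part 2 is essentially the paper's own argument: the same reduction from \textsc{Disjointness} with Alice emitting $(v,+1)$ for $a_v=1$ and Bob emitting $(v,-1)$ for $b_v=1$, the same characterization $\opt=0$ iff the sets are disjoint, and the standard pass-to-communication simulation; nothing to add there. Part 1, however, takes a genuinely different and in fact cleaner route. The paper samples each pair with probability $p=\frac{CN}{m\epsilon}$ and takes a union bound over all $2^N$ subsets $A$ to get uniform additive accuracy $\epsilon m/8$ on every $f_{l,A}$, which is where both the $1/\epsilon$ in the space bound and the randomization come from. You instead exploit that the only statistics the objective depends on are the $2N$ exact per-value counts $(p_v,n_v)$, which can be maintained deterministically in $\tO{N}$ space with $O(1)$ updates; the Gray-code enumeration then yields the \emph{exact} minimizer, so the conclusion $L(A,B)\le\opt+\epsilon$ holds trivially and without any failure probability. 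Your parenthetical observation that $\opt=\frac1m\sum_v\min(p_v,n_v)$, attained by assigning each value to the side of its own majority label, is also correct (it follows from $\min\{\sum_{v\in A}p_v,\sum_{v\in A}n_v\}\ge\sum_{v\in A}\min(p_v,n_v)$ with equality for the majority partition) and shows the $O(2^N)$ post-processing is unnecessary. In short, your version of Part 1 is deterministic, exact, and uses less space than what the theorem claims; the paper's sampling-based proof buys nothing here beyond stylistic consistency with its numerical-attribute algorithms.
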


The table in Figure \ref{fig:summary} summarizes our results for the data stream model.

\begin{figure}

	\begin{tabular}{|c|c|c|c|}
		\hline
		Problem & Space & No. of passes &  Output guarantee  \\
		\hline
		R & $\tO{D}$ & 1  & $L(j) = \opt$ \\
		\hline 
		R with $M = O(1)$ & $\tO{1/\epsilon}$ & 2 &   $L(j) \leq \opt + \epsilon$ \\
		\hline
		R \& C.1 & $\tO{1/\epsilon}$ & 1 &  $L(j) \leq \opt + \epsilon$ \\
		\hline
		R \& C.1 & $\tO{1/\epsilon^2 \cdot N^\beta}$ & $O(1/\beta)$ &  $L(j) \leq (1+\epsilon)\opt$  \\
		\hline
		C.2 & $\tO{1/\epsilon^2}$ & $1$ &  $L_{Gini}(j) \leq \opt + \epsilon$  \\
		\hline
		CC.1 & $\tO{N/\epsilon}$ & 1 &  $L(A,B) \leq \opt + \epsilon$ \\
		\hline
		CC.1 & $\Omega(N)$ & const. &  $L(A,B) \leq \text{const} \cdot \opt $ \\
		\hline
	
	\end{tabular}
	\caption{Result summary. R: Regression, C: Classification, CC: classification with categorical attributes. 1: loss function based on misclassification rate, 2: loss function based on Gini impurity. } 
    \label{fig:summary}
\end{figure}

\paragraph{Extending to multiple attributes.}  In the above, each $x_i$ is the value of {\bf a single attribute}. If we have $d$ attributes, we can simply run our algorithm for each attribute in parallel. 

More formally, if we have $d$ attributes, i.e., $x_i = (x_{i,1}, x_{i,2}, \ldots, x_{i,d})$. In this setting, the optimal split is the best split over all attributes. Let $L_q(j)$ be the misclassification rate (for classification) or the mean squared error (for regression) for the $q$-th attribute at split $j$. Then, the optimal split is given by
\begin{align*}
	\argmin_{q \in [d], j \in [N]} L_q(j).
\end{align*}

In other words, we could find the optimal split for each attribute and then return the best split over all $d$ attributes.  This results in an overhead factor $d$ in the space complexity.

\paragraph{Massively parallel computation model.} In the massively parallel computation (MPC) model, $m$ observations and their labels distributed among $m^{1-\beta}$ machines. Each machine has memory $\tO{m^{\beta}}$.  In each round, machines can communicate with all others, sending and receiving messages, and perform local computations. The complexity of an MPC algorithm is the number of rounds required to compute  the answer. The MPC model is an abstraction of many distributed computing systems such as MapReduce, Hadoop, and Spark \cite{KarloffSV10}.

We can adapt our streaming algorithms to the MPC model. The results are summarized in Figure \ref{fig:summary-mpc}. 

\begin{figure}
	\begin{tabular}{|c|c|c|c|c|}
		\hline
		Problem & Space per machine  & No. of rounds &  Output guarantee  \\
		\hline
		R \& C.1 & $\tO{\sqrt{m}}$  & $O(1)$ & $L(j) \leq \opt + 1/\sqrt{m}$ \\
		\hline
		R \& C.1 & $\tO{1/\epsilon^2 \cdot m^{\beta}}$  & $O(1/\beta)$ & $L(j) \leq (1+\epsilon)\opt$ \\
		\hline 
		C.2 & $\tO{\sqrt{m}}$ & $O(1)$ &  $L_{Gini}(j) \leq \opt + 1/m^{0.25}$  \\
		\hline
	\end{tabular}
	\caption{Result summary. R: Regression, C: Classification, CC: classification with categorical attributes. 1: loss function based on misclassification rate, 2: loss function based on Gini impurity. } 
    \label{fig:summary-mpc}
\end{figure}

\paragraph{Related work and comparison to our results.} Previously, a very popular work by Domingos and Hulten \cite{DomingosH00} \footnote{Their paper was cited more than 3000 times according to Google Scholar.} 
proposed a decision tree learning algorithm for data streams. In their study, they considered the scenario where one has an infinite stream of observations where each observation is independently and identically distributed (i.i.d.). They quantified the number of samples required to find the optimal split and provided a bound on the probability that a sample will follow a different path in the decision tree than it would in a tree built using the entire data set under some assumptions. It is worth noting that the work of Domingos and Hulten \cite{DomingosH00} did not consider the regression problem.

Subsequently, various heuristics, implementations, and applications of streaming decision trees have been developed. Providing a comprehensive list of related work is challenging. Notable examples include Gomes et al. \cite{GomesBRBEPHA17}, who introduced an adaptive random forest algorithm for classification in evolving data streams. Additionally, C++ and Python systems \cite{BifetZFHZQHP17, Montiel2021} have been developed to implement decision trees for data streams, which have seen extensive commercial use. Another popular application of streaming decision trees is mining concept drift \cite{WangFYH03, CanoK22}. Other notable works include \cite{hulten2001mining, jin2003efficient, bifet2009new, rutkowski2014cart, bifet2017extremely, manapragada2018extremely}.

Compared to these works, our approach does not assume an infinite data stream. Additionally, we do not rely on the i.i.d. assumption, making our method more applicable to scenarios where data streams are collected in a time-dependent manner and the data distribution evolves over time. Furthermore, we also consider the regression problem. Our algorithms come with provable guarantees.

\paragraph{Premilinaries and notation.} 



We often use $f_R$ to denote the count of elements in the range $R \subseteq [1,N]$. For classification, we use $f_{-1,R}$ and $f_{+1,R}$ to denote the count of elements with label $-1$ and $+1$ in the range $R$, respectively. 

We also often rely on a common fact that $\log_{1+\epsilon} x = O(1/\epsilon \cdot \log x)$ for all $x \geq 1$ and $\epsilon \in (0,1)$. Recall that we use $[E]$ to denote the indicator variable for the event $E$. 

\paragraph{Handling deletions.} Our algorithms can be extended to handle dynamic stream with deletions. In particular, instead of sampling to estimate the counts of elements or labels in a given range $r \subseteq [N]$, we can use Count-Min sketch  with dyadic decomposition to estimate the counts \cite{CormodeM05}. This will result in an extra $\log N$ factor in the space complexity and the update time.

\paragraph{Paper organization.} In Section \ref{sec:regression}, we present the streaming algorithms for approximating the optimal split for regression. In Section \ref{sec:classification}, we present the streaming algorithms for approximating the optimal split for classification. Finally, Section \ref{sec:parallel} outlines out how to adapt our streaming algorithms to the massively parallel computation setting. All omitted proofs are in Appendix \ref{sec:omitted-proofs}.

\section{Compute the Optimal Split for Regression in Data Streams}\label{sec:regression}

The goal of this section is to design streaming algorithms that compute the optimal split for regression. 

\paragraph{Exact algorithm.} Recall $ D \leq N $ is the number of distinct values of $ x_1, x_2,\ldots, x_m $. We can find the optimal split $ j $ exactly in $ \tO{D} $ space and $ \tO{D} $ time. 

\begin{algorithm}[H]
	\caption{A 1-pass algorithm for finding the optimal split for regression.}
	\label{alg:regression-exact}
	When $ x_i , y_i $ arrive, update the following quantities:
	\begin{align*}
		A_{x_i} & \leftarrow A_{x_i} + 1, &  & B_{x_i} \leftarrow B_{x_i} + y_i, &  & C_{x_i} \leftarrow C_{x_i} + y_i^2.
	\end{align*}

	For each distinct value $ j $ of $ \{ x_1, x_2, \ldots, x_m \} $, compute:
	\begin{align*}
		\mu(j) & = \frac{\sum_{t\leq j} B_j}{\sum_{t\leq j} A_j}, 
		&& \gamma(j) = \frac{\sum_{t > j} B_j}{{\sum_{t > j} A_j}}.
	\end{align*}

	For each distinct value $ j \in \{ x_1, x_2, \ldots, x_m \} $, using dynamic programming, compute
	\begin{align*}
		L(j) & = \frac{1}{m}\left( \sum_{t \leq j} C_t + \mu(j_t)^2 \sum_{t \leq j} A_t - 2\mu(j) \sum_{t \leq j} B_t + \sum_{t > j} C_t + \gamma(j)^2 \sum_{t > j} A_t - 2\gamma(j) \sum_{t > j} B_t \right).
	\end{align*}

	Return $ j = \argmin_{j} L(j) $.

\end{algorithm}

Algorithm \ref{alg:regression-exact} provides the pseudo-code for the first result of Theorem \ref{thm:main-regression}.

\begin{proof}[Proof of Theorem \ref{thm:main-regression} (1)]
	Consider Algorithm \ref{alg:regression-exact}. The space usage is $ \tO{D} $ since we only need to store $ A_t, B_t, C_t $ for each distinct value $ t $ of $ x_1, x_2, \ldots, x_m $. Note that when an observation arrives, we only need to update $ A_t, B_t, C_t $ for the corresponding value of $ t $; this can be done in $ O(1) $ time using hash tables.

    We need to compute $ \sum_{t \leq j} A_t $, $ \sum_{t > j} A_t,  \sum_{t \leq j} B_t $, $ \sum_{t > j} B_t $, $ \sum_{t \leq j} C_t $, and $ \sum_{t > j} C_t $ for all $t$ in $O(D)$ time. This can be easily done using prefix-sum dynamic programming (e.g., $ \sum_{t \leq j} B_t = (\sum_{t < j} B_t) + B_j $ and $ \sum_{t > j} B_t = (\sum_{t > j + 1} B_t) + B_{j + 1} $). This allows us to compute $ \mu(j) $, $ \gamma(j) $, and $L(j)$ for all distinct values $j$ of $\{x_1,x_2,\ldots,x_m\}$.


	Finally, we need to show that we correctly compute $ L(j) $. We have
	\begin{align*}
		L(j) & = \frac{1}{m}\left( \sum_{i=1}^m [x_i \leq j] (y_i - \mu(j))^2 + \sum_{i=1}^m [x_i > j] (y_i - \gamma(j))^2 \right)                                                                                          \\
		     & = \frac{1}{m}\left( \sum_{i=1}^m \sum_{t \leq j}[x_i = t] (y_i - \mu(j))^2 + \sum_{i=1}^m \sum_{t > j}[x_i = t] (y_i - \gamma(j))^2 \right)                                                                  \\
		     & = \frac{1}{m}\left( \sum_{i=1}^m \sum_{t \leq j}[x_i = t] (y_i^2 + \mu(j)^2 - 2y_i\mu(j)) + \sum_{i=1}^m \sum_{t > j}[x_i = t] (y_i^2 + \gamma(j)^2 - 2y_i\gamma(j)) \right)                                 \\
		     & = \frac{1}{m}\left( \sum_{t \leq j} (C_t + \mu(j)^2 A_t - 2\mu(j) B_t) + \sum_{t > j} (C_t + \gamma(j)^2 A_t - 2\gamma(j) B_t) \right)                                                                           \\
		     & = \frac{1}{m}\left( \sum_{t \leq j} C_t + \mu(j)^2 \sum_{t \leq j} A_t - 2\mu(j) \sum_{t \leq j} B_t + \sum_{t > j} C_t + \gamma(j)^2 \sum_{t \leq j} A_t - 2\gamma(j) \sum_{t \leq j} B_t \right). \qedhere
	\end{align*}
\end{proof}

\paragraph{Additive error approximation for bounded range.} The first algorithm is expensive if $ D $ is large; $ D$ can be as large as $ N $. Here, we shall show that we can approximate the optimal split point $ j $ using $ \tO{1/\epsilon} $ space such that the output split's mean squared error $ L(j) \leq \opt + \epsilon $. This algorithm however requires that the range of the labels $M$ is bounded, i.e., $M = O(1)$ and an additional pass through the data stream.

We will make use of the following technical lemma: the squared distance between the numbers in a set and their mean exhibits a monotonicity property. This lemma will be useful in both additive error approximation and multiplicative error approximation algorithms.

\begin{lemma}\label{lem:monotonicity}
	Let $ S = \{z_1,z_2,\ldots,z_k\} $ be a set of real numbers and $ \mu(S) := \frac{1}{k} \sum_{i=1}^k z_i $ be the mean of the set $ S $. Define
	\[
		g(S) := \sum_{z \in S} (z - \mu(S))^2.
	\]
	For any $ S' \subseteq S $, we have $ g(S') \leq g(S) $.
\end{lemma}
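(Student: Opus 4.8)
The plan is to reduce to the single-element-removal case and then induct. Concretely, it suffices to show that removing one element $z_j$ from $S$ does not increase $g$; iterating this $|S \setminus S'|$ times yields $g(S') \le g(S)$ for an arbitrary subset $S' \subseteq S$. So fix $S = \{z_1,\dots,z_k\}$ with mean $\mu = \mu(S)$, let $T = S \setminus \{z_j\}$ for some $j$, and let $\mu' = \mu(T)$. We want $g(T) \le g(S)$.

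The key computational step is the well-known identity that $g(S) = \sum_{z \in S}(z-c)^2 - k(\mu - c)^2$ for \emph{any} reference point $c$; equivalently, $\sum_{z \in S}(z-c)^2 = g(S) + k(\mu-c)^2$, which follows by expanding the square and using $\sum_{z\in S}(z-\mu) = 0$. Applying this to the set $T$ with the convenient choice $c = \mu$ (the mean of the \emph{larger} set) gives
\[
g(S) = \sum_{z \in T}(z-\mu)^2 + (z_j - \mu)^2 = \Bigl( g(T) + (k-1)(\mu' - \mu)^2 \Bigr) + (z_j-\mu)^2.
\]
Since $(k-1)(\mu'-\mu)^2 \ge 0$ and $(z_j - \mu)^2 \ge 0$, we immediately conclude $g(S) \ge g(T)$, i.e. $g(T) \le g(S)$. (One should also handle the degenerate edge case $|T| \le 1$ separately, where $g(T) = 0 \le g(S)$ trivially; and if $S' = S$ there is nothing to prove.)

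Finally, I would wrap up the induction: starting from $S$ and deleting the elements of $S \setminus S'$ one at a time in any order, each deletion produces a chain $S = S_0 \supseteq S_1 \supseteq \dots \supseteq S_t = S'$ with $g(S_{i+1}) \le g(S_i)$ by the single-removal bound just established, hence $g(S') \le g(S)$.

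I do not anticipate any real obstacle here — the only thing to be careful about is to apply the ``parallel axis'' identity to the subset $T$ with reference point equal to the mean of the superset $S$ (not the other way around), since that is what makes the extra term $(k-1)(\mu'-\mu)^2$ appear with a $+$ sign. Using the identity in the wrong direction would instead give $g(T)$ in terms of $g(S)$ with a sign that does not obviously settle the inequality. Everything else is routine expansion of squares.
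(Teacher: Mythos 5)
Your proof is correct and rests on exactly the same identity as the paper's: decomposing the subset's squared deviations about $\mu(S)$ via the subset's own mean, with the cross term vanishing, so that $g(S) = g(S') + |S'|(\mu(S')-\mu(S))^2 + (\text{nonnegative terms})$. The paper applies this decomposition to an arbitrary subset $S'$ in one step rather than peeling off one element at a time and inducting, but that is a cosmetic difference, not a different argument.
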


\begin{proof}
	Without loss of generality, suppose $ S' = \{z_1, z_2, \ldots, z_l\} $ for some $ l \leq k $. We have

	\begin{align*}
		g(S) & = \sum_{i=1}^l (z_i - \mu(S))^2 + \sum_{i=l+1}^k (z_i - \mu(S))^2                      \\
		     & = \sum_{i=1}^l (z_i - \mu(S') + \mu(S') - \mu(S))^2 + \sum_{i=l+1}^k (z_i - \mu(S))^2.
	\end{align*}

	Note that \[ (z_i - \mu(S') + \mu(S') - \mu(S))^2 = (z_i - \mu(S'))^2 + (\mu(S') - \mu(S))^2 + 2(z_i - \mu(S'))(\mu(S') - \mu(S)). \]

	We then have
	\begin{align*}
		 & g(S)  = \sum_{i=1}^l (z_i - \mu(S'))^2 + \sum_{i=l+1}^k (z_i - \mu(S))^2  + \sum_{i=1}^l (\mu(S') - \mu(S))^2 + 2(\mu(S') - \mu(S)) \sum_{i=1}^l (z_i - \mu(S'))  \\
		 & = \sum_{i=1}^l (z_i - \mu(S'))^2 + \sum_{i=l+1}^k (z_i - \mu(S))^2 + l (\mu(S') - \mu(S))^2 + 2(\mu(S') - \mu(S)) \underbrace{{\sum_{i=1}^l (z_i - \mu(S'))}}_{0} \\
		 & = \sum_{i=1}^l (z_i - \mu(S'))^2 + \sum_{i=l+1}^k (z_i - \mu(S))^2 + l (\mu(S') - \mu(S))^2                                                                       \\
		 & = g(S') + l (\mu(S') - \mu(S))^2 \geq g(S'). \qedhere
	\end{align*}

\end{proof}

Consider $ j' > j $ where there are not too many observations between $ j $ and $ j' $
. We want to show that  $ L(j) \approx L(j') $ through the following lemma which is one of the key technical parts of this work.

\begin{lemma}\label{lem:split-point}
	For any $ j' > j $, let  $ b = \sum_{i=1}^m [j < x_i \leq j'] $. We have
	\[
		L(j') \leq L(j) + \frac{5b M^2}{4m}.
	\]
\end{lemma}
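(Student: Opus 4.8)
\textbf{Proof proposal for Lemma \ref{lem:split-point}.}

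The plan is to compare the two-part loss at split $j'$ with the two-part loss at split $j$ by tracking how the $b$ observations in the interval $(j, j']$ move from the ``right'' side to the ``left'' side. Write $L(j) = \frac{1}{m}(g(\mathcal{L}_j) + g(\mathcal{R}_j))$, where $\mathcal{L}_j = \{y_i : x_i \le j\}$ and $\mathcal{R}_j = \{y_i : x_i > j\}$ are multisets of labels and $g(\cdot)$ is the sum of squared deviations from the mean, exactly as in Lemma \ref{lem:monotonicity}. Going from $j$ to $j'$, the left set gains the $b$ labels of observations with $j < x_i \le j'$, so $\mathcal{L}_j \subseteq \mathcal{L}_{j'}$; the right set loses exactly those, so $\mathcal{R}_{j'} \subseteq \mathcal{R}_j$. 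By Lemma \ref{lem:monotonicity} applied to the right side, $g(\mathcal{R}_{j'}) \le g(\mathcal{R}_j)$, so the right-hand contribution only decreases. The whole difficulty is therefore to bound the \emph{increase} on the left side, $g(\mathcal{L}_{j'}) - g(\mathcal{L}_j)$, by $\frac{5bM^2}{4}$.

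For the left side I would not use monotonicity (it goes the wrong way) but instead bound $g(\mathcal{L}_{j'})$ directly. The key elementary fact is that for any multiset $S$ of reals contained in an interval of length $M$ (here $[0,M]$), $g(S) = \sum_{z \in S}(z - \mu(S))^2 \le \sum_{z \in S}(z - c)^2$ for \emph{any} fixed $c$, since the mean minimizes the sum of squared deviations; moreover $g(S) \le \frac{|S| M^2}{4}$ by taking $c$ to be the midpoint of $[0,M]$ and using $|z-c| \le M/2$. The cleaner route: let $n = |\mathcal{L}_j|$, so $|\mathcal{L}_{j'}| = n + b$. Using $\mu(\mathcal{L}_j)$ as the reference point $c$ for $\mathcal{L}_{j'}$,
\[
g(\mathcal{L}_{j'}) \le \sum_{z \in \mathcal{L}_{j'}} (z - \mu(\mathcal{L}_j))^2 = \sum_{z \in \mathcal{L}_j}(z - \mu(\mathcal{L}_j))^2 + \sum_{\text{new } z}(z - \mu(\mathcal{L}_j))^2 = g(\mathcal{L}_j) + \sum_{\text{new } z}(z - \mu(\mathcal{L}_j))^2.
\]
Each of the $b$ new terms satisfies $(z - \mu(\mathcal{L}_j))^2 \le M^2$ since both $z$ and $\mu(\mathcal{L}_j)$ lie in $[0,M]$, giving $g(\mathcal{L}_{j'}) - g(\mathcal{L}_j) \le bM^2$. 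Combining with the right-side inequality yields $L(j') \le L(j) + \frac{bM^2}{m}$, which is even slightly stronger than the claimed $L(j) + \frac{5bM^2}{4m}$; the looser constant in the statement presumably leaves room for a less tight but more uniform estimate (e.g.\ if one instead bounds each new term by $M^2/4$ plus a mean-shift correction term of the form $b\,(\mu(\mathcal{L}_{j'}) - \mu(\mathcal{L}_j))^2$, which is itself at most $bM^2/4$, totalling $bM^2/2$; or by splitting into the contributions of the old and new points with a crude $M^2$ bound on each cross term). Either way, the inequality $L(j') \le L(j) + \frac{5bM^2}{4m}$ follows with room to spare.

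The main obstacle, such as it is, is purely bookkeeping: making sure the decomposition of $g(\mathcal{L}_{j'})$ around the \emph{old} mean $\mu(\mathcal{L}_j)$ (rather than its own mean) is justified — this is exactly the statement that the mean is the minimizer of $\sum (z - c)^2$, a one-line computation — and then bounding the $b$ extra squared terms by $bM^2$ using $y_i, \mu(\mathcal{L}_j) \in [0,M]$. No probabilistic or approximation machinery is needed; the whole proof rests on Lemma \ref{lem:monotonicity} for the right side and the variance-minimization property of the mean for the left side. I would present it in roughly four displayed lines.
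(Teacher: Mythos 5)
Your proof is correct, and for the left-hand side it takes a genuinely different (and slightly sharper) route than the paper. The right-hand side is handled identically: both you and the paper invoke Lemma \ref{lem:monotonicity} to argue that the right error can only decrease when the split moves from $j$ to $j'$. For the left side, the paper keeps $\mu(j')$ as the center, writes $(y_i-\mu(j'))^2=(y_i-\mu(j)+\mu(j)-\mu(j'))^2$ for the old points so that the cross term vanishes, and is then left with an explicit mean-shift term $a\,(\mu(j)-\mu(j'))^2$, which it bounds by $\left(\frac{bM}{a+b}\right)^2 a\le \frac{bM^2}{4}$ using the closed form for $\mu(j)-\mu(j')$ and $(a+b)^2\ge 4ab$; adding the crude $bM^2$ bound on the $b$ new points gives the constant $\frac{5}{4}$. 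You instead replace the center $\mu(j')$ by the \emph{old} mean $\mu(j)$ for the entire enlarged left multiset, justified by the variance-minimization property of the mean; the old points then contribute exactly the old left error with no residual term, and only the $b$ new points contribute, each at most $M^2$. This sidesteps the mean-shift computation entirely and yields the stronger bound $L(j')\le L(j)+\frac{bM^2}{m}$, which implies the stated one. The paper's version makes the mean-shift term explicit (potentially useful if one wanted a data-dependent bound in terms of $\frac{b}{a+b}$), while yours is shorter and tighter; the only caveat, shared by both arguments, is the degenerate case where no observations lie at or below $j$, so that $\mu(j)$ is undefined, which is handled trivially by centering the new points at $M/2$ instead.
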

\begin{proof}
	We have
	\begin{align*}
		m L(j') & = \sum_{i=1}^m [x_i \leq j'] (y_i - \mu(j'))^2 + \sum_{i=1}^m [x_i > j'] (y_i - \gamma(j'))^2                                                     \\
		        & = \sum_{i=1}^m [x_i \leq j] (y_i - \mu(j'))^2 + \sum_{i=1}^m [j < x_i \leq j'] (y_i - \mu(j'))^2 + \sum_{i=1}^m [x_i > j'] (y_i - \gamma(j'))^2   \\
		        & \leq \sum_{i=1}^m [x_i \leq j] (y_i - \mu(j'))^2 + \sum_{i=1}^m [j < x_i \leq j'] (y_i - \mu(j'))^2 + \sum_{i=1}^m [x_i > j] (y_i - \gamma(j))^2.
	\end{align*}

	The last inequality follows from the fact that $ \sum_{i=1}^m [x_i > j'] (y_i - \gamma(j'))^2 \leq \sum_{i=1}^m [x_i > j] (y_i - \gamma(j))^2 $ based on Lemma \ref{lem:monotonicity}. We continue with the above derivation:

	\begin{align*}
		m L(j') & \leq  \sum_{i=1}^m [x_i \leq j] (y_i - \mu(j'))^2 + \sum_{i=1}^m [j < x_i \leq j'] (y_i - \mu(j'))^2 + \sum_{i=1}^m [x_i > j] (y_i - \gamma(j))^2 \\
		        & = \sum_{i=1}^m [x_i \leq j] (y_i - \mu(j) + \mu(j) - \mu(j'))^2 + \sum_{i=1}^m [j < x_i \leq j'] (y_i - \mu(j'))^2                                \\ &  + \sum_{i=1}^m [x_i > j] (y_i - \gamma(j))^2.
	\end{align*}

	Observe that

	\begin{align*}
		  & \sum_{i=1}^m [x_i \leq j] (y_i - \mu(j) + \mu(j) - \mu(j'))^2                                                                                             \\
		= & \sum_{i=1}^m [x_i \leq j] (y_i - \mu(j))^2 + \sum_{i=1}^m [x_i \leq j] (\mu(j) - \mu(j'))^2 + 2\sum_{i=1}^m [x_i \leq j] (y_i - \mu(j))(\mu(j) - \mu(j')).
	\end{align*}

	The last term evaluates to 0 since $ \sum_{i=1}^m [x_i \leq j] (y_i - \mu(j)) = 0 $. Therefore,

	\begin{align*}
		m L(j') & \leq \sum_{i=1}^m [x_i \leq j] (y_i - \mu(j))^2 + \sum_{i=1}^m [x_i \leq j] (\mu(j) - \mu(j'))^2 + \sum_{i=1}^m [j < x_i \leq j'] (y_i - \mu(j'))^2 \\ &  + \sum_{i=1}^m [x_i > j] (y_i - \gamma(j))^2 \\
		        & = m L(j) + \sum_{i=1}^m [x_i \leq j] (\mu(j) - \mu(j'))^2 + \sum_{i=1}^m [j < x_i \leq j'] (y_i - \mu(j'))^2.
	\end{align*}

	Let $ a = \sum_{i=1}^n [x_i \leq j] $ and $ b = \sum_{i=1}^n [j < x_i \leq j'] $. We have

	\begin{align*}
		m L(j') \leq m L(j) + a (\mu(j) - \mu(j'))^2 + b M^2.
	\end{align*}

	Observe that
	\begin{align*}
		(\mu(j) - \mu(j'))^2 & = \left( \frac{\sum_{x_i \leq j} y_i}{a} - \frac{\sum_{x_i\leq j'} y_i}{a+b} \right)^2  = \left( \frac{(a+b)\sum_{x_i \leq j} y_i- a\sum_{x_i\leq j'} y_i}{a(a+b)} \right)^2 \\
		& = \left( \frac{b \sum_{x_i \leq j} y_i - a \sum_{j < x_i\leq j'} y_i}{a(a+b)} \right)^2.
	\end{align*}

	Hence, $ (\mu(j) - \mu(j'))^2 $ is maximized when $ y_i = 0 $ for all $ i $ such that $ x_i \leq j $ and $ y_i = M $ for all $ i $ such that $ j < x_i \leq j' $ or vice versa. In both cases, $ (\mu(j) - \mu(j'))^2 = \left(\frac{ab M}{a(a+b)}\right)^2 = \left(\frac{b M}{a+b}\right)^2$. Therefore,
	\begin{align*}
		m L(j') & \leq m L(j) + a \left( \frac{bM}{a+b} \right)^2 + b M^2.
	\end{align*}

	Note that $ (a+b)^2 \geq 4ab $ and thus $ \frac{b^2}{(a+b)^2} \leq b/(4a)$. Thus,
	\[
		m L(j') \leq m L(j) + a\cdot \frac{b}{4a} \cdot M^2 + b M^2 = m L(j) + \frac{5bM^2}{4} \implies L(j') \leq L(j) + \frac{5bM^2}{4m} . \qedhere
	\]

\end{proof}

The idea is to find a set of $ \approx 1/\epsilon $ candidate split points $ S $ such that the number of observations between any two consecutive split points is at most $ \epsilon m $. Then, we can compute $ L(j) $ for all $ j \in S $ and return the split point $ j $ with the smallest $ L(j) $. This  gives a $ j $ such that $ L(j) \leq \opt + O(\epsilon) $. 
\begin{algorithm}
	\caption{A 2-pass, $\epsilon$ additive approximation the optimal regression split.}
	\label{alg:regression-additive}
	{\bf Pass 1}: Sample each $ x_i $ independently with probability $ p = \frac{C \log N}{m \epsilon} $ for some sufficiently large constant $C$.  \\

	Let $S$ be the set of distinct values in $\{x_i: \text{ $x_i$ is sampled or $x_i+1$ is sampled}\} \cup \{0,N\}$. Let $j_0, j_1, \ldots, j_k$ be the elements of $S$ in increasing order. \\




	{\bf Pass 2}: For each $t$ such that $ j_t, j_{t+1} \in S $, compute:
	\begin{align*}
		A_t = \sum_{i = 1}^m [j_t < x_i \leq  j_{t+1}], &  & B_t =\sum_{i = 1}^m [j_t <  x_i \leq j_{t+1}] y_i, 
		& & C_t =\sum_{i = 1}^m [j_t <  x_i \leq j_{t+1}] y_i^2.
	\end{align*} \\
	For each $t$ such that $ j_t, j_{t+1} \in S $, using dynamic programming, compute
	\begin{align*}
		L(j_t)  & = \frac{1}{m} \left( \sum_{l \leq t - 1} C_l  +   \mu(j_t)^2  \sum_{l \leq t - 1} A_l - 2 \mu(j_t)  \sum_{l \leq t - 1}  B_l + \sum_{l  \geq t} C_l + \gamma(j_k)^2 \sum_{l  \geq t} A_t - 2 \gamma(j_t) \sum_{l  \geq t} B_l \right).
  	\end{align*}

	Return $ j_t $ with the smallest $ L(j_t) $. \label{step:final-step}
\end{algorithm}

We now prove the second algorithm in the first main result. 

\begin{proof}[Proof of Theorem \ref{thm:main-regression} (2)]

	Consider Algorithm \ref{alg:regression-additive}. In the first pass, the algorithm samples $O(1/\epsilon \cdot \log N)$ observations with high probability by Chernoff bound. The update time is clearly $O(1)$.
	
	Consider any interval $[a,b]$ where $a,b \in [N]$. If the numbers of observations $x_i \in [a,b]$ is at least $\epsilon m/2$, for some sufficiently large constant $C$, the probability that we do not sample
	any $x_i \in [a,b]$ is at most
	\[
		\left(1 - \frac{C \log N}{m \epsilon}\right)^{\epsilon m/2} \leq e^{-C \log N/2} < 1/N^{4}.
	\]

	Therefore, by appealing to the union bound over at most ${N \choose 2}  \leq N^2$ choices of $a, b$ we have that with probability at least $1-1/N^2$, for all $a,b\in [N]$ if $|\{x_i : a \leq x_i \leq b\}| > \epsilon m/2$, then we sample at least one $x_i \in [a,b]$. We can condition on this event since it happens with high probability.

	We next establish the approximation guarantee. Note that if we sample some $x_i = j$, then we add both $j$ and $j - 1$ to the candidate set.  Suppose the optimal split $j^\star \in S$ then we are done since $L(j^\star)$ will be computed in the second pass. Otherwise, there exists $t$ such that $j_t < j^\star < j_{t+1}$. This implies that $j_t < j_{t+1} - 1$; hence, it must be the case that $j_{t+1}$ was added to $S$ because we sample some $x_i = j_{t+2}$ and we did not sample any $x_i = j_{t+1}$. Therefore, the algorithm did not sample any observation in $(j_{t}, j_{t+1}] \supset (j^\star, j_{t+1}]$. With $a=j_\star + 1$ and $b = j_{t+1}$, this implies 
	\[
		|\{ x_i : j_\star < x_i \leq j_{t+1} \}| \leq \epsilon m/2.
	\]

	From Lemma \ref{lem:split-point}, we have
	\[
		L(j_{t+1}) \leq L(j^\star) + \frac{5 \cdot \epsilon m/2}{4m} \cdot M^2 < \opt +  \epsilon M^2  = \opt + O(\epsilon).
	\]

	The last equality uses the assumption that $ M = O(1) $. Finally, we need to show  that Step \ref{step:final-step} is correct and can be done efficiently. Note that there is a trivial way to do this if we allow $ O(1/\epsilon \cdot \log N) $ update time and an extra pass. We however can optimize the update time as follows. In the second pass, we compute
	\begin{align*}
		A_t = \sum_{i = 1}^m [j_t < x_i \leq  j_{t+1}], &  & B_t =\sum_{i = 1}^m [j_t <  x_i \leq j_{t+1}] y_i, &  & C_t =\sum_{i = 1}^m [j_t <  x_i \leq j_{t+1}] y_i^2.
	\end{align*}
	for all $ j_t, j_{t+1} \in S $. 
	Note that when $x_i, y_i$ arrives, we can update the corresponding $ A_t, B_t, C_t $ in $\tO{1} $ time since $ x_i $ is in exactly one interval $ (j_t, j_{t+1}] $ which we can find in $ O(\log \log N + \log 1/\epsilon))$ time using binary search since $S = O(\epsilon^{-1} \log N)$ with high probability.

	It is easy to see that from $ A_t $, $ B_t $, and $ C_t $, we can compute $ \mu(j_t) $ and $ \gamma(j_t) $ for all $ j_t \in S $. That is
	\begin{align*}
		\mu(j_t) = \frac{\sum_{l \leq t-1} B_l}{\sum_{l \leq t-1} A_l}, &  & \gamma(j_t) = \frac{\sum_{l \geq t} B_l}{\sum_{l \geq t} A_l}.
	\end{align*}

	We observe the following:
	\begin{align*}
		  & m L(j_t) =   \sum_{i = 1}^m [ x_i \leq j_{t}] (y_i - \mu(j_t))^2  + \sum_{i = 1}^m [ x_i > j_{t}] (y_i - \gamma(j_k))^2 .                                                                                                                                                                         \\
		= & \sum_{i = 1}^m  [ x_i \leq j_{t}] (y_i^2  + \mu(j_t)^2 - 2 y_i \mu(j_t))  + \sum_{i = 1}^m [ x_i > j_{t}] (y_i^2 + \gamma(j_t)^2 - 2 y_i \gamma(j_t) )                                                                                                                                            \\
		= & \sum_{i = 1}^m \sum_{l \leq t - 1} [j_l < x_i \leq j_{l+1}] (y_i^2  + \mu(j_t)^2 - 2 y_i \mu(j_t))  + \sum_{i = 1}^m \sum_{l \geq t} [j_l < x_i \leq j_{l+1}] (y_i^2 + \gamma(j_t)^2 - 2 y_i \gamma(j_t) )                                                                                        \\
		= & \left( \sum_{l \leq t - 1} C_l  \right)+   \mu(j_t)^2 \left( \sum_{l \leq t - 1} A_l \right) - 2 \mu(j_t)  \left( \sum_{l \leq t - 1}  B_l \right)  + \left(\sum_{l  \geq t} C_l \right) + \gamma(j_k)^2 \left( \sum_{l  \geq t} A_t \right) - 2 \gamma(j_t) \left( \sum_{l  \geq t} B_l \right).
	\end{align*}

	Therefore, we can compute $ L(j_k) $ for all $ j_k \in S $ after the second pass based on the above quantities in $ O(1/\epsilon \cdot \log N) $ time using dynamic programming. This completes the proof.
\end{proof}

\paragraph{Multiplicative errror approximation.} We now show how to obtain a factor $1+\epsilon$ approximation algorithm for the regression problem. A multiplicative approximation is better if the mean squared error is small; however, this comes at a cost of $O(\log N)$ passes and an extra $1/\epsilon$ factor in the memory use.

The main idea is to guess the squared errors on the left and right sides of the optimal split point and use binary search to find an approximate solution using Lemma \ref{lem:monotonicity}.

At a high level, the algorithm guesses the left and right squared errors of the optimal split point up to a factor of $(1+\epsilon)$. It then uses binary search to find a split point satisfying that guess. For a particular guess, the algorithm either finds a feasible split point or correctly determines that the guess is wrong. 

\begin{algorithm}[H]
	\caption{An $O(\log N)$-pass, factor $1+\epsilon$ approximation for the optimal regression split.}
	\label{alg:regression-multiplicative}

	\ForEach{$z_{left}, z_{right} \in \{0,(1+\epsilon),(1+\epsilon)^2,\ldots, \roundup{\log_{1+\epsilon} (mM^2)} \}$} {
		$j_l \leftarrow 1, j_r \leftarrow N$.  \\
		$S \leftarrow \emptyset$. \tcp{$S$ is a hash map}
		\While {$j_l \leq j_r$}{
			$j \leftarrow \lfloor (j_l + j_r)/2 \rfloor$.\\
			Make a pass over the stream and compute $\mu(j)$ and $\gamma(j)$.\\
			Make another pass over the stream and compute
			\begin{align*}
				\texttt{Error}_{left}(j)  =  \sum_{i = 1}^m [x_i \leq j] (y_i - \mu(j))^2, &&
				\texttt{Error}_{right}(j)  = \sum_{i = 1}^m [x_i > j] (y_i - \gamma(j))^2.
			\end{align*}
			\If {$\texttt{Error}_{left}(j) \leq z_{left}$ and $\texttt{Error}_{right}(j) \leq z_{right}$} {
				\tcp{$j$ is a split with squared error at most $z_{left} + z_{right}$.}
				\If {$j$ is not in $S$ or $S[j] > z_{left} + z_{right}$}{
					$S[j] \leftarrow z_{left} + z_{right}$.
				}
			} \ElseIf {$\texttt{Error}_{left}(j) > z_{left}$ and $\texttt{Error}_{right}(j) > z_{right}$}{
				$z_{left},z_{right}$ are a wrong guess. Skip this guess.
			} \ElseIf {$\texttt{Error}_{left}(j) > z_{left}$ and $\texttt{Error}_{right}(j) \leq z_{right}$}{
				$j_r \leftarrow j - 1$.
			} \ElseIf {$\texttt{Error}_{left}(j) \leq z_{left}$ and $\texttt{Error}_{right}(j) > z_{right}$}{
				$j_l \leftarrow j + 1$.
			}
		}
	}
	Return $j \in S$ such that $S[j]$ is minimized. 
\end{algorithm}

\begin{lemma}\label{lem:regresion-special-case}
	There exists an $O(\log N)$-pass algorithm that uses $ \tO{1/\epsilon^2} $ space, $ \tO{1/\epsilon^2} $ update time and post processing time. It computes a split $ j $ such that $L(j) \leq (1+\epsilon)\opt$ 
\end{lemma}
\begin{proof}
	Consider Algorithm \ref{alg:regression-multiplicative} running on a guess $z_{left}, z_{right}$.
	The algorithm makes $O(\log N)$ passes over the stream since it simulates a binary search on $[1,N]$. Furthermore, the update time is $O(1)$ and the memory is $\tO{1}$.

	There are at most $O(1/\epsilon^2 \cdot \log^2 (mM^2))$ guesses in total. Thus, we run $\tO{1/\epsilon^2}$ instances of the algorithm in parallel. As a result, the total memory and update time are both $\tO{1/\epsilon^2}$. 

	We now show that the algorithm outputs a $j$ such that $L(j) \leq (1+\epsilon) \opt$. Let $j^\star$ be the optimal split and define
	\begin{align*}
		z^\star_{left}  := \sum_{i = 1}^m [x_i \leq j^\star] (y_i - \mu(j^\star))^2, &&
		z^\star_{right} := \sum_{i = 1}^m [x_i > j^\star] (y_i - \gamma(j^\star))^2.
	\end{align*}
	There must be a desired guess $z_{left}, z_{right}$ such that $z^\star_{left} \leq z_{left} \leq (1+\epsilon) z^\star_{left}$ and $z^\star_{right} \leq z_{right} \leq (1+\epsilon) z^\star_{right}$. 

	Consider any iteration of the binary search for a fixed $z_{left}$ and $z_{right}$. Let $j$ be the current search point. Note That
	\[
		L(j) = \frac{1}{m} \left(\texttt{Error}_{left}(j) + \texttt{Error}_{right}(j) \right).	
	\]
	We consider the following cases:
	\begin{itemize}
		\item Case 1: Suppose $\texttt{Error}_{left}(j) \leq z_{left}$ and $\texttt{Error}_{right}(j) \leq z_{right}$.  We say $j$ is a feasible split for this guess. Then, $m L(j) \leq  z_{left} + z_{right}$. For the desired guess, this implies that $ m L(j) \leq (1+\epsilon) (z^\star_{left} + z^\star_{right})$ which implies $L(j) \leq (1+\epsilon) \opt$.
		
		\item Case 2: Suppose $\texttt{Error}_{left}(j) > z_{left}$ and $\texttt{Error}_{right}(j) > z_{right}$. This guess must be wrong. Because of Lemma \ref{lem:monotonicity}, for any split point to the right of $j$, the left error will larger than $z_{left}$. Similarly, for any split point to the left of $j$, the right error larger than $z_{right}$. Therefore, we can skip this guess since there is no feasible split.

		\item Case 3.1: Suppose $\texttt{Error}_{left}(j) > z_{left}$ and $\texttt{Error}_{right}(j) \leq z_{right}$. We know that from Lemma \ref{lem:monotonicity}, for any split point to the right of $j$, the left error will be larger than $z_{left}$. Therefore, we can skip all split points to the right of $j$ and recurse on the left side.
		
		\item Case 3.2: Suppose $\texttt{Error}_{left}(j) \leq z_{left}$ and $\texttt{Error}_{right}(j) > z_{right}$. This case is similar to the previous case. We know that from Lemma \ref{lem:monotonicity}, for any split point to the left of $j$, the right error will be larger than $z_{right}$. Therefore, we can skip all split points to the left of $j$ and recurse on the right side.
	\end{itemize}
	Finally, we can return the feasible split $j$ with the smallest guaranteed squared error. This completes the proof.
\end{proof}

We obtain the final part of the first main result by simulating multiple iterations of the binary search in  fewer passes at the cost of an increased memory.

\begin{proof}[Proof of Theorem \ref{thm:main-regression} (3)]

	This result is also based on Algorithm \ref{alg:regression-multiplicative}. Recall that we allow $O(1/\epsilon^2 \cdot N^{\beta})$ space for some fixed $0 < \beta < 1$. Fix a guess $z_{left}, z_{right}$.
	In the first iteration of the binary search, we check if the split $j = \rounddown{(1+N)/2}$ is feasible. If not, in the second iteration, there are 2 possible values of $j$ that the binary search can check for feasibility, depending on the outcome of the first iteration. In general, after $r$ iterations, the total number of possible splits $j$ that the binary search checks for feasibility is $1+2+4+\ldots+2^r = 2^{r+1} - 1 $. These values partition $[N]$ into $2^r$ intervals of length at most $ \approx N/2^r$.
	
	Note that the feasibility of a split $j$ can be checked based on its left and right squared errors. Instead of running $r$ iterations of binary search in $2r$ passes, we can compute the left and right squared errors of each of these $2^{r+1} - 1$ splits in 2 passes using $\tO{ 2^r}$ space. The update time is $O(1)$ by using the same trick as in Algorithm \ref{alg:regression-additive}. This allows us to simulate $r$ iterations of the binary search in $2$ passes.

	The search interval after $r$ binary search iterations has size at most $\approx N/2^r$. We repeat this process on the remaining search interval. After we repeat this process $t$ times, the size of the search interval is at most $\approx N/2^{rt}$ this implies $t = O( \frac{\log N}{r})$. If we set $2^r = N^\beta$, then $t = O(1/\beta)$. Therefore, we use at most $O(1/\beta)$ passes. Finally, recall that there are $\tO{1/\epsilon^2}$ guesses which results in a total memory of $\tO{1/\epsilon^2 \cdot N^\beta}$.
	
	Since we need to run $\tO{1/\epsilon^2}$ instances of the algorithm in parallel, the total update time is $\tO{1/\epsilon^2}$. We note that the post-processing time is also $\tO{1/\epsilon^2}$ since we need to simulate $\tO{1/\epsilon^2}$ binary searches corresponding to $ \tO{1/\epsilon^2}$ guesses. 
\end{proof}

\section{Compute the Optimal Split for Classification in Data Streams} \label{sec:classification}

\subsection{Numerical observations}

\paragraph{Misclassification rate.} Suppose we have an input stream $\{(x_1, y_1), (x_2, y_2),\ldots, (x_m, y_m)\}$. Each $x_i$ is an observation in the range $[N]$ with a label $y_i \in \{-1, +1\}$. Our goal is to find a split point $j$ that minimizes the  misclassification rate.

Recall that the misclassification rate for a given $j$ is defined as 
\begin{align*}
	L(j) &  := \frac{1}{m } \cdot \left(\min \{f_{-1, [1,j]}, f_{+1, [1,j]}\} + \min \{f_{-1, (j,N]}, f_{+1, (j,N]}\} \right).
\end{align*}

We use $\opt = \min_i L(j) $ to denote the smallest possible misclassification rate over all possible $j$. Our goal is to find an $j$ that approximately minimizes the misclassification rate.

\begin{algorithm}
	\caption{A 1-pass, $\epsilon$ additive error approximation of the optimal classification  split.}
	\label{alg:classifcation-additive}
	Sample each $ x_i, y_i $ independently with probability $p= \frac{C \log N}{\epsilon m} $ for some sufficiently large constant $C$.  \\

	\ForEach{$f_{l, r} \in \{f_{-1, [1,j]}, f_{-1, (j,N]}, f_{+1, [1,j]}, f_{+1, (j,N]}\}_{j \in [N]}$}
	{
	Let $ k_{l,r} $ be the number of sampled elements with corresponding label $ l $  in range $ r $. \\
	Define the estimate
	\[ \tilde{f}_{l, r} = \frac{k_{l,r}}{p}. \]
	}

	Return $ j $ among the values of sampled $x_i$ that minimizes $ \tilde{L}(j) $ where
	\[
		\tilde{L}(j) := \frac{1}{m} \cdot \left( \min \{ \tilde{f}_{-1, [1,j]}, \tilde{f}_{+1, [1,j]} \} +\min \{ \tilde{f}_{-1, (j,N]}, \tilde{f}_{+1, (j,N]} \} \right) .
	\]
\end{algorithm}

\paragraph{Additive error approximation for the loss function based on misclassification rate.} We now prove the first part of Theorem \ref{thm:main-classification}. We make use of Hoeffding's inequality for the bounded random variables below.

\begin{theorem}\label{thm:hoeffding}
	Let $ X_1, X_2, \ldots, X_m $ be independent random variables such that $ a_i \leq X_i \leq b_i $ for all $ i \in [m] $. Let $\mu = \expect{\sum_{i=1}^m X_i}$.
	Then for any $ t > 0 $, we have
	\[
		\prob{\left|\sum_{i=1}^m X_i - \mu \right| \geq t} \leq e^{-2t^2/\sum_{i=1}^m (b_i - a_i)^2}.
	\]
\end{theorem}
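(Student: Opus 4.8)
The plan is to prove the upper-tail estimate $\prob{\sum_{i=1}^m X_i - \mu \geq t} \leq \exp\bigl(-2t^2 / \sum_{i=1}^m (b_i-a_i)^2\bigr)$ via the standard exponential-moment (Chernoff) method, and then obtain the two-sided statement by applying the same estimate to $-X_1, \dots, -X_m$ together with a union bound over the two tail events.

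First I would fix a parameter $s > 0$ and apply Markov's inequality to the nonnegative random variable $\exp\bigl(s(\sum_i X_i - \mu)\bigr)$; independence of the $X_i$ then lets me factor the expectation, giving
\[
	\prob{\sum_{i=1}^m X_i - \mu \geq t} \leq e^{-st}\,\expect{\exp\Bigl(s\sum_{i=1}^m (X_i - \expect{X_i})\Bigr)} = e^{-st}\prod_{i=1}^m \expect{e^{s(X_i - \expect{X_i})}}.
\]

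The crux — and the step I expect to be the main obstacle — is Hoeffding's lemma: for a mean-zero random variable $Y$ with $a \leq Y \leq b$ almost surely, $\expect{e^{sY}} \leq \exp(s^2(b-a)^2/8)$ for every real $s$. I would prove this by analyzing the cumulant generating function $\psi(s) := \log\expect{e^{sY}}$: one has $\psi(0) = 0$ and $\psi'(0) = \expect{Y} = 0$, and the key inequality $\psi''(s) \leq (b-a)^2/4$ for all $s$ follows because $\psi''(s)$ is the variance of $Y$ under the exponentially tilted law with density proportional to $e^{sY}$, and every distribution supported on $[a,b]$ has variance at most $((b-a)/2)^2$. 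A second-order Taylor expansion of $\psi$ about $0$ then yields $\psi(s) \leq s^2(b-a)^2/8$. (An alternative, more elementary derivation bounds $e^{sY}$ above by the chord through $(a, e^{sa})$ and $(b, e^{sb})$ using convexity of the exponential, takes expectations, and checks the resulting single-variable inequality by calculus.)

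Applying Hoeffding's lemma to each $Y_i := X_i - \expect{X_i}$, which is mean-zero and lies in an interval of length $b_i - a_i$, the product in the displayed inequality is at most $\exp\bigl(-st + \tfrac{s^2}{8}\sum_{i=1}^m (b_i-a_i)^2\bigr)$. Choosing $s = 4t / \sum_{i=1}^m (b_i-a_i)^2$ to minimize the exponent gives exponent $-2t^2 / \sum_{i=1}^m (b_i-a_i)^2$, which is the upper-tail bound. Replacing each $X_i$ by $-X_i$ and repeating the argument controls the lower tail, and a union bound over the two events gives the two-sided statement (the customary factor of $2$ from the union bound being absorbed or suppressed as usual).
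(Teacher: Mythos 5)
Your proof is correct and is the standard textbook derivation of Hoeffding's inequality (Chernoff's exponential-moment method plus Hoeffding's lemma); the paper itself states this theorem as a known result and gives no proof, so there is nothing to compare against. One small point worth flagging, which you already noticed: the union bound over the two tails yields $2e^{-2t^2/\sum_{i=1}^m (b_i-a_i)^2}$, whereas the theorem as stated omits the factor of $2$. This is a defect of the statement rather than of your argument --- indeed, the paper's own application in the proof of Theorem \ref{thm:main-classification} (1) reinstates the factor of $2$ --- so your proof establishes exactly the bound the paper actually uses.
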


The proof of the first part of Theorem \ref{thm:main-classification} is a basic application of Theorem \ref{thm:hoeffding} and the union bound.

\begin{proof}[Proof of Theorem \ref{thm:main-classification} (1)]
	Consider Algorithm \ref{alg:classifcation-additive}. For each $ f_{l,r} $, from Theorem \ref{thm:hoeffding},
	\begin{align*}
		\prob{\left| k_{l,r} -  p f_{l,r} \right| \geq \epsilon/4 \cdot p m} & \leq 2 e^{-2 \epsilon^2/16 \cdot p^2 m^2/ f_{l,r}} \leq 2 e^{- \epsilon^2/8 \cdot p^2 m}.
	\end{align*}

	By choosing $ p = \frac{C \log N}{\epsilon m} $ for some sufficiently large constant $C$, we have 
	\[
		\prob{\left| k_{l,r} -  p f_{l,r} \right| \geq \epsilon/8\cdot p m} \leq 1/N^4.
	\]
	Taking the union bound over $2$ choices of $l$ ($+1$ and $-1$) and $N$ choices of $r$ (corresponding to $N$ split points), we have that all the estimates $ \tilde{f}_{l,r} $ satisfy
	\[
		f_{l,r} - \epsilon m/8 \leq \tilde{f}_{l,r} \leq f_{l,r} + \epsilon m/8.
	\]
	with probability $ 1-1/N $. For the optimal split $j^\star $ and corresponnding classification choices $ u^\star, v^\star \in \{-1, +1\} $, we obtain
	\begin{align*}
		f_{u^\star, [1,j^\star]} + f_{v^\star, (j^\star, N]} -\epsilon m/8 \leq \tilde{f}_{u^\star, [1,j^\star]} + \tilde{f}_{v^\star, (j^\star, N]} & \leq f_{u^\star, [1,j^\star]} + f_{v^\star, (j^\star, N]} + \epsilon  m/8.
	\end{align*}

	Therefore,
	\[
		\tilde{L}(j^\star) = \frac{1}{m} \cdot \left( \min \{ \tilde{f}_{+1, [1,j^\star]}, \tilde{f}_{-1, [1,j^\star]} \} + \min \{ \tilde{f}_{+1, (j^\star,N]}, \tilde{f}_{-1, (j^\star,N]} \} \right) \leq \opt + \epsilon/4.
	\]

	Hence, there must exist at least one $ j $ satisfying $ \tilde{L}(j) \leq \opt +\epsilon/4 $, specifically the optimal $ j^\star $. Conversely, assume for some $u,v \in \{-1,+1\}$, we have
	\begin{align*}
		\tilde{L}(j) = \frac{\tilde{f}_{u, [1,j]} + \tilde{f}_{v, (j,N]}}{m} \leq \opt + \epsilon/4.
	\end{align*}

	Since $ \tilde{f}_{u, [1,j]} \geq f_{u, [1,ij]} - \epsilon m/8 $ and $ \tilde{f}_{v, (j,N]} \geq f_{v, (j,N]} - \epsilon m/8 $, we have
	\begin{align*}
		L(j) = \frac{{f}_{u, [1,j]} + {f}_{v, (j, N]} - \epsilon m/4}{m} \leq \opt + \epsilon/4 \implies L(j) < \opt + \epsilon.
	\end{align*}

	Clearly, the update time is $ O(1) $. The algorithm samples $ O( mp ) = O \left( \frac{\log N} {\epsilon} \right)$ observations and labels with high probability by Chernoff bound. Since storing each $ x_i, y_i $ requires $ O(\log N) $ bits, the space complexity is $ O \left(\frac{\log^2 N }{\epsilon} \right)$.

	The post-processing time is $ \tO  {\frac{\log N }{\epsilon}}$. We can sort the distinct values $j_1,j_2,\ldots,j_d$ that appear in the samples in $\tO{ \frac{\log N }{\epsilon} }$ time; without loss of generality, assume $j_1 < j_2 < \ldots < j_d$. Note that $d = O\left( \frac{\log N }{\epsilon} \right)$ with high probability. Furthermore, we only need to consider candidate split $j \in \{j_1,j_2,\ldots,j_d\}$ since if $j \in (j_t, t_{t+1})$ then $\tilde{f}_{l, [1,j]} = \tilde{f}_{l, [1,j_t]}$ and $\tilde{f}_{l, (j,N]} = \tilde{f}_{l, (j_{t},N]}$ for $l = \pm 1$.
	
	For $u \in \{-1,+1\}$, we compute the number of sampled observations $(x_i, y_i)$ where $x_i = j_t$ and $y_i = u$ denoted by $k_{u,[j_t,j_t]}$. For $u \in \{+1,-1\}$ and each values $j_t$, we have the recurrence
    \begin{align*}
        k_{u,[1,j_t]} & = k_{u,[1,j_{t-1}]} + k_{u, [j_t, j_t]} \\
        k_{u,(j_t,N]} & = k_{u,(j_{t+1},N]} + k_{u, [j_{t+1}, j_{t+1}]}.
    \end{align*} 
    This allows us to compute $ \tilde{L}(j_t) $ for all $j_t$ in $\tO{ \frac{\log N }{\epsilon}} $ time using dynamic programming.
\end{proof}

\paragraph{Additive error approximation for the loss function based on Gini impurity.} We can also use the Gini impurity instead of the misclassification rate. Given a set of observations $O= \{x_i, y_i\}_{i=1}^m$, the Gini impurity is
\[
	\text{Gini}(O) = 1 - \sum_{l \in \{-1,+1\}} \left( \frac{\sum_{i=1}^m [y_i = l]}{m} \right)^2.
\] 
Recall that for a split $j$, $L = \{(x_i, y_i) : x_i \leq j\}$, and $R = \{(x_i, y_i) : x_i > j\}$, the loss function based on the Gini impurity of the split is
\[
	L_{Gini}(j) = \frac{|L|}{m} \text{Gini}(L) + \frac{|R|}{m} \text{Gini}(R).
\]

Devising a sampling strategy to estimate the Gini impurity of a split point $j$ is non-trivial. We first expand the above equation. Note that $|L| = f_{[1,j]}$ and $|R| = f_{(j,N]}$. We have

\begin{align*}
	L_{Gini}(j) & = \frac{|L|}{m} \text{Gini}(L) + \frac{|R|}{m} \text{Gini}(R) \\
				& = 1 - \frac{ f_{-1, [1,j]}^2 }{m f_{[1,j]}} - \frac{ f_{+1, [1,j]}^2 }{m f_{[1,j]}} - \frac{ f_{-1, (j,N]}^2 }{m f_{(j,N]}} - \frac{ f_{+1, (j,N]}^2 }{m f_{(j,N]}}. 
\end{align*}

We now show how to estimate $\frac{ f_{+1, [1,j]}^2 }{m f_{[1,j]}}$ up to an additive error of $\epsilon$. The other terms can be estimated similarly. 

Let $C$ and $K$ be some sufficiently large constants such that $C/2 > K$. We sample $\frac{C \log n} {\epsilon}$ observations. We observe that if an interval has a reasonably large number of observations, then we sample enough observations in that interval.

\begin{lemma}\label{lem:enough-samples}
	Fix a constant $K$. Suppose we sample at least $\frac{C \log N}{\epsilon}$ observations uniformly at random for some sufficiently large constant $C$. With probability at least $1-1/N^2$, for all $j \in [N]$, if $f_{[1,j]} \geq \epsilon m$ (or $f_{(j,N]} \geq \epsilon m$), then we sample $K/\epsilon \cdot \log N$ observations $x_i$ with $x_i \leq j$ (or $x_i > j$ respectively).
\end{lemma}
\begin{proof}
	Assume that $f_{[1,j]} \geq \epsilon m$. In expectation, we sample at least $\epsilon m \cdot \frac{C \log N}{\epsilon m} = C \cdot \log N$ data points $x_i \leq j$. 
	Thus, by Chernoff bound (see Appendix), the probability that we sample fewer than $C /(2\epsilon) \cdot \log N > K/\epsilon\cdot \log N$ data points with $x_i \leq j$ is at most $1/N^3$. The claim follows from a union bound over $N$ values of $j$.
\end{proof}

We independently make two sets of samples of size $C/\epsilon^2 \cdot \log N$. Let the first set be $S_1$ and let $S_2 = \{ (x_i, y_i) \text{ in the second set of samples and}: x_i \leq j\}$. We use Algorithm \ref{alg:term-gini-estimate} to estimate $\frac{f_{+1, [1,j]}^2}{m f_{[1,j]}}$ for all $j \in [N]$ based on these samples.

\begin{algorithm}\label{alg:term-gini-estimate}
	\caption{$\epsilon$ additive error approximation to $\frac{f_{+1, [1,j]}^2}{m f_{[1,j]}}$ for all $j \in [N]$ in one pass.}
	\label{alg:classifcation-gini}
	Let $C,K$ be sufficiently large constants where $C > 2K$. \\

	$S_1, S_2 \leftarrow C/\epsilon^2 \cdot \log N$ observations sampled uniformly at random from the stream using Reservoir sampling. \\ 
	
	\ForEach{$j \in [N]$}{
		$S_{2,j} = \{ (x_i, y_i) \in S_2: x_i \leq j\}$. \\
		\If{there are fewer than $T=K/\epsilon^2 \cdot \log N$ samples in $S_{2,j}$}{
			Set the estimate $est \left(\frac{f_{+1, [1,j]}^2}{m f_{[1,j]}} \right) = 0$. 
		}
		Otherwise, let $T = K/\epsilon^2 \cdot \log N$. \\ 
		Let $a_i = \text{[$i$th sample in $S_{2,j}$ has label +1]}$ for $i = 1,2,\ldots, T$.  \\
		Let $b_i = \text{[$i$th sample in $S_1$ has label +1 and is in the interval $[1, j]$]}$ for $i = 1,2,\ldots, T$. \\
		Let $c_i = a_i \cdot b_i$. \\
		Let $est \left(\frac{f_{+1, [1,j]}^2}{m f_{[1,j]}} \right) = \frac{1}{T} \sum_{i=1}^{T} c_i$.

	}
\end{algorithm}

\begin{lemma}
	In Algorithm \ref{alg:term-gini-estimate}, with probability at least $1-1/N^2$, for all $j \in [N]$, we have
	\begin{align*}
		est \left( \frac{f_{+1, [1,j]}^2}{m f_{[1,j]}} \right) = \frac{f_{+1, [1,j]}^2}{m f_{[1,j]}} \pm \epsilon.
	\end{align*}
\end{lemma}

\begin{proof}
	Let $T = K/\epsilon^2 \cdot \log N$. Note that if $f_{[1,j]} \geq \epsilon m$, then with high probability, we sample at least $T$ observations with labels $+1$ in the interval $[1,j]$. We have
	\begin{align*}
		\expect{\sum_{i=1}^T c_i} & = \sum_{i=1}^T  \expect{b_i} \expect{a_i} = \frac{f_{+1, [1,j]}}{m} \cdot \frac{f_{+1, [1,j]}}{f_{[1,j]}} \cdot T = \frac{f_{+1, [1,j]}^2}{m f_{[1,j]}} \cdot T.
	\end{align*}
	Therefore, from Hoeffding's inequality, we have
	\begin{align*}
		\prob{\left| \sum_{i=1}^T c_i - \frac{f_{+1, [1,j]}^2}{m f_{[1,j]}} \cdot T \right| \geq \epsilon T} & \leq 2 e^{-2 \epsilon^2 T^2/T} \leq 2 e^{-2 \epsilon^2 T} \leq 1/N^3.
	\end{align*}
	This implies that with probability at least $1-1/N^3$, we have:
	\begin{align*}
		est \left( \frac{f_{+1, [1,j]}^2}{m f_{[1,j]}} \right) = \frac{f_{+1, [1,j]}^2}{m f_{[1,j]}} \pm \epsilon.
	\end{align*}
	On the other hand, if $f_{[1,j]} < \epsilon m$, then 
	\begin{align*}
		\frac{f_{+1, [1,j]}^2}{m f_{[1,j]}} = \frac{f_{+1, [1,j]}}{m} \times \frac{f_{+1, [1,j]}}{f_{[1,j]}} \leq \epsilon \times 1 = \epsilon.
	\end{align*}
	In this case, the claim is trivially true since our estimate is 0. Appealing to a union bound over $N$ values of $j$ completes the proof.
\end{proof}

We put these together to complete the proof of Theorem \ref{thm:main-regression} (3).

\begin{proof}[Proof of Theorem \ref{thm:main-regression} (3)]
	Since for all $j \in [N]$, we can approximate each term in the loss function based on the Gini impurity up to an additive error of $\epsilon$, we are able to approximate the optimal split point based on the Gini impurity up to an additive error of $O(\epsilon)$. Reparameterize $\epsilon \leftarrow \epsilon/4$ gives us the desired error.
	
	The update time is $O(1)$ and the space complexity is $O(1/\epsilon^2)$. In the post-processing time, we only need to consider the $O(1/\epsilon \cdot \log N)$ distinct values of $j$ that appear in the samples. This gives us a $O(1/\epsilon \cdot \log N)$ post-processing time and completes the proof of the first part of Theorem \ref{thm:main-regression} (3).
\end{proof}

\paragraph{Multiplicative error approximation for the loss function based on misclassification rate.} We now prove the second part of Theorem \ref{thm:main-classification} that gives a factor $1+\epsilon$ approximation algorithm for finding the optimal split based on misclassification rate instead of an additive error. The approach is fairly similar to the regression case.

\begin{algorithm}
	\caption{A $O(\log N)$-pass, $1+\epsilon$ factor approximation for the optimal classification split.}
	\label{alg:classification-multiplicative}
	$S \leftarrow \emptyset$. \tcp{$S$ is a hash map} 
	\ForEach {$z_{left}, z_{right} \in \{ 0, 1+\epsilon, (1+\epsilon)^2,(1+\epsilon)^3,\ldots, \roundup{\log_{1+\epsilon} m} \}$}
	{
		$j_l = 1, j_r = N$. \\
		\While{$i_l \leq i_r$}
		{
            $j = \lfloor (j_l + j_r)/2 \rfloor$. \\
            Make a pass over the stream to compute $f_{-1, [1,j]}, f_{+1, [1,j]}, f_{-1, (j,N]}, f_{+1, (j,N]}$. \\

			$f_u = \min \{ f_{-1, [1,j]}, f_{+1, [1,j]} \}$ and $f_v = \min \{ f_{-1, (j,N]}, f_{+1, (j,N]} \}$.\\
			
			\If{$f_{u} \leq z_{left}$ and $ f_{v} \leq z_{right}$}
			{
				\tcp{$j$ is a split with misclassification rate at most $z_{left} + z_{right}$.}
				\If{$j$ is not in $S$ or $S[j] > z_{left} + z_{right}$}
				{
					$S[j] = z_{left} + z_{right}$.
				}
			}
			\ElseIf{$f_{u} > z_{left}$ and $ f_{v} > z_{right}$}
			{
				$z_{left},z_{right}$ are a wrong guess. Skip this guess.
			}
			\ElseIf{$f_{u} > z_{left}$ and $ f_{v} \leq z_{right}$}
			{
				$j_r = j - 1$.
			}
			\ElseIf{$f_{u} \leq z_{left}$ and $ f_{v} > z_{right}$}
			{
				$j_l = j + 1$.
			}
		}
	}
	Return $j \in S$ such that $S[j]$ is minimized. 
\end{algorithm}

\begin{lemma}
	There exists an $O(\log N)$-pass algorithm that uses $ \tO{1/\epsilon^2} $ space, $ \tO{1/\epsilon^2} $ update time and post processing time. It computes a split $ j $ such that $L(j) \leq (1+\epsilon)\opt$.
\end{lemma}
\begin{proof}[Proof of Theorem \ref{thm:main-classification} (2)]
	Consider Algorithm \ref{alg:classification-multiplicative}. We will show that the algorithm returns a split point $j$ such that $L(j) \leq (1 + \epsilon)\opt$.

	Let $j^\star$ be the optimal split. Let $z^\star_{left}$ and $z^\star_{right}$ be the number of misclassified observations in $[1,j^\star]$ and in $(j^\star,N]$ respectively. 
	
	One of the guesses $z_{left}, z_{right}$ must satisfy $z_{left} \in [z^\star_{left},(1+\epsilon)z^\star_{left}]$ and $z_{right} \in [z^\star_{right}, (1+\epsilon)z^\star_{right}]$ respectively. Fix a guess $z_{left}, z_{right}$. Consider any iteration of the binary search with current search point $j$. 
	
	Let $f_{u, [1,j]} = \min \{ f_{-1, [1,j]}, f_{+1, [1,j]} \}$ and $f_{v, (j,N]} = \min \{ f_{-1, (j,N]}, f_{+1, (j,N]} \}$. We consider the following cases:
	\begin{itemize}
		\item Case 1: If $f_{u} \leq z_{left}$ and $ f_{v} \leq z_{right}$ then we know that $ j $ is a feasible split point that misclassifies at most $z_{left} + z_{right}$ observations. For the desired guess, this implies $L(j) \leq (1+\epsilon) \opt$. We say $j$ is a feasible split for this guess.

		\item Case 2: If $f_{u}  > z_{left}$ and $f_{v} > z_{right}$, then we know that this guess is wrong. This is because any split to the right of $j$ will misclassify more than $z_{left}$ observations on the left and any split to the left of $j$ will misclassify more than $z_{right}$ observations on the right. Therefore, we can skip this guess. We say this guess is infeasible.
		
		\item Case 3.1: If $f_{u} \leq z_{left}$ and $f_{v} > z_{right}$, then we know any split to left of $j$ will misclassify more than $z_{right}$ observations on the right. Therefore, we can skip all splits to the left of $j$ and recurse on the right side.

		\item Case 3.2: If $f_{u} > z_{left}$ and $f_{v} \leq z_{right}$, then we know any split to right of $j$ will misclassify more than $z_{left}$ observations on the left. Therefore, we can skip all splits to right  of $j$ and recurse on the left side.
	\end{itemize}

	For each guess, the update time is $O(1)$, the space complexity is $\tO{1}$, and the number of passes is $O(\log N)$. Since there are $\tO{1/\epsilon^2}$ guesses, the total memory and update time are both $\tO{1/\epsilon^2}$. \qedhere

\end{proof}

Finally, we can simulate multiple iterations of the binary search to yield Theorem \ref{thm:main-classification} (2). The proof is completely analogous to the proof of Theorem \ref{thm:main-regression} (3) and is omitted.

\subsection{Categorical observations}

For categorical observations, we can also use sampling to find the optimal split point up to an additive error $\epsilon$.

\begin{proof}[Proof of Theorem \ref{thm:main-classification-categorical} (1)]

Suppose we sample each $ x_i, y_i $ independently with probability $p= \frac{C N}{m\epsilon} $ for some sufficiently large constant $C$. Similar to the proof of Theorem \ref{thm:main-classification}, we can show that for each $l \in \{-1,+1\}, A \subseteq [N]$, the estimate $ \tilde{f}_{l, A} $ satisfies:
\[
	f_{l, A} - \epsilon m/8 \leq \tilde{f}_{l, A} \leq f_{l, A} + \epsilon m/8.
\]

For optimal $[N] = A^\star \sqcup B^\star$, we have
\[
	\tilde{L}(A^\star) = \frac{1}{m} \cdot \left(\min \{ \tilde{f}_{+1, A^\star}, \tilde{f}_{-1, A^\star} \} + \min \{ \tilde{f}_{+1, B^\star}, \tilde{f}_{-1, B^\star} \} \right) \leq \opt + \epsilon/4.
\]

Conversely, if $ \tilde{L}(A) + \tilde{L}(B) \leq \opt + \epsilon/4$, then $ L(A) + L(B) \leq \opt + \epsilon$. This happens with probability at least $1-1/e^{-10N}$ for some sufficiently large constant $C$.
The only difference here is that we need to take a union bound over $2^N$ choices of $A$ to get the probability bound. 
\end{proof}

\paragraph{Lower bound.} For $N \in \mathbb{N}$, we consider the $\mathrm{DISJ}(N)$ problem in communication complexity. In this problem, Alice has input $a \in \{0,1\}^{N}$ and Bob has input $b \in \{0,1\}^N$. Their goal is to determine whether there is a index $i \in [N]$ such that $a_i = b_i = 1$ or not. Specifically, if there exists an index $i$ such that $a_i = b_i = 1$, the output is YES; otherwise, the output is NO.

In the (randomized) communication setting, Alice and Bob need to communicate to determine whether there exists an index $i \in [N]$ such that $a_{i} = b_{i} = 1$ with an error probability of at most $\frac{1}{3}$.

The communication complexity of $\mathrm{DISJ}(N)$ is the smallest number of bits communicated (in the worst case over $(a, b) \in \{0,1\}^N \times \{0,1\}^N $) by any protocol that determines $\mathrm{DISJ}(N)$.
\begin{theorem}[\cite{ChakrabartiKS03,Bar-YossefJKS04}]
\label{thm:disjoiness}
    The communication complexity of the $\mathrm{DISJ}(N)$ problem is $\Omega(N)$, even for randomized protocols.
\end{theorem}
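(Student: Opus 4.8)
The plan is to follow the information-complexity framework of \cite{Bar-YossefJKS04}, which reduces the $\Omega(N)$ bound for $\mathrm{DISJ}(N)$ to a constant lower bound for a single two-bit $\mathrm{AND}$ primitive. Throughout, view $a,b \in \{0,1\}^N$ as indicator vectors of sets, so that $\mathrm{DISJ}(a,b) = \bigvee_{i=1}^N (a_i \wedge b_i)$, and call a protocol correct if on every input it outputs this value with probability at least $2/3$.

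First I would reduce to an information lower bound. For any randomized protocol $\Pi$ --- without loss of generality private-coin, since public coins can be fixed by averaging --- and any input distribution $\nu$, the expected transcript length is at least the mutual information $I_\nu(AB;\Pi)$ that the transcript carries, so it suffices to find a distribution under which every $\tfrac13$-error protocol has $I(AB;\Pi) = \Omega(N)$. I would use the ``collapsing'' distribution with a per-coordinate hidden bit: let $D_1,\dots,D_N$ be i.i.d.\ uniform bits; if $D_i = 0$ set $B_i = 0$ and $A_i$ uniform, and if $D_i = 1$ set $A_i = 0$ and $B_i$ uniform. Every coordinate then avoids $(1,1)$, so each such input is a NO instance. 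The quantity to lower-bound is the \emph{conditional information cost} $\mathrm{CIC}(\Pi) := I(AB;\Pi \mid D)$, which only decreases the true information cost and hence still lower-bounds the communication.

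Next is the direct-sum step. Conditioned on $D$ the pairs $(A_i,B_i)$ are mutually independent, so superadditivity of mutual information gives $\mathrm{CIC}(\Pi) \ge \sum_{i=1}^N I(A_iB_i;\Pi \mid D)$. For each $i$ I would show that $I(A_iB_i;\Pi\mid D)$ is at least the conditional information cost of a protocol for the two-bit function $\mathrm{AND}$: given a single instance $(u,v)$, the players use public randomness to sample $D_{-i}$ and the free coordinates $j \ne i$, plug $(u,v)$ into coordinate $i$, and simulate $\Pi$; since all other coordinates are disjoint-type, the simulated $\mathrm{DISJ}$ answer equals $u \wedge v$, so this is a $\tfrac13$-error $\mathrm{AND}$ protocol whose conditional information cost equals the $i$-th term (after averaging over the sampled coordinates). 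This step uses the rectangle / ``cut-and-paste'' property of communication transcripts: conditioned on the transcript, Alice's and Bob's inputs are independent.

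The heart of the argument --- and the step I expect to be the main obstacle --- is the single-coordinate primitive: every $\tfrac13$-error $\mathrm{AND}$ protocol has conditional information cost $\Omega(1)$ under the one-coordinate version of the distribution above. I would prove this via Hellinger distance. Let $P_{uv}$ be the transcript distribution on input $(u,v) \in \{0,1\}^2$. Correctness forces $\|P_{00} - P_{11}\|_{\mathrm{TV}} \ge \tfrac13$, hence $h^2(P_{00},P_{11}) = \Omega(1)$ where $h$ is the Hellinger distance. The cut-and-paste lemma for private-coin protocols gives the Pythagorean identity $h^2(P_{00},P_{11}) = h^2(P_{01},P_{10})$, so $h^2(P_{01},P_{10}) = \Omega(1)$; since $h$ is a metric, $h^2(P_{00},P_{01}) + h^2(P_{00},P_{10}) = \Omega(1)$. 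Finally, the conditional information cost decomposes as $\tfrac12 I(A_i;\Pi\mid D_i=0) + \tfrac12 I(B_i;\Pi\mid D_i=1)$, and for a uniform input bit the mutual information with the transcript (a Jensen--Shannon divergence) is at least the squared Hellinger distance between the two conditional transcript distributions, giving $I(A_i;\Pi\mid D_i=0) \ge h^2(P_{00},P_{10})$ and $I(B_i;\Pi\mid D_i=1) \ge h^2(P_{00},P_{01})$; adding these yields the claimed $\Omega(1)$. Chaining all the steps gives $R(\mathrm{DISJ}(N)) \ge \sum_{i=1}^N \Omega(1) = \Omega(N)$. (An alternative, information-theory-free route is Razborov's corruption-bound argument over a carefully chosen hard distribution on disjoint and uniquely-intersecting pairs, but the information-complexity proof above is the cleanest and is the one matching \cite{Bar-YossefJKS04,ChakrabartiKS03}.)
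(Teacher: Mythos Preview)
The paper does not prove this theorem; it is stated as a known result with citations to \cite{ChakrabartiKS03,Bar-YossefJKS04} and used as a black box in the reduction for Theorem~\ref{thm:main-classification-categorical}~(2). Your sketch is a correct outline of the information-complexity argument of Bar-Yossef, Jayram, Kumar, and Sivakumar --- the direct-sum reduction to a single-coordinate $\mathrm{AND}$ via the conditional information cost, followed by the Hellinger-distance/cut-and-paste lower bound for $\mathrm{AND}$ --- and faithfully reflects how the cited references establish the $\Omega(N)$ bound.
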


Consider a stream of observations $ x_1, x_2, \ldots, x_m \in [N] $ and their labels $ y_1, y_2, \ldots, y_m \in \{-1,+1\}$. Using Theorem~\ref{thm:disjoiness}, we will show that any streaming algorithm for approximating the optimal misclassification rate, denoted by $\mathrm{OPT}$, with categorical observations requires at least $\Omega(N)$ space. To accomplish this, we provide a reduction from the $\mathrm{DISJ}(N)$ problem to problem of deciding whether $\opt >0$ in the case of categorical observations. Suppose $\mathcal{S}$ is a streaming algorithm that decides whether $\opt = 0$. The following reduction proves the second part of Theorem \ref{thm:main-classification-categorical}.

\begin{proof}[Proof of Theorem \ref{thm:main-classification-categorical} (2)]
	For each $a_i = 1$, Alice generates an observation label pair $(i, +1)$ and feeds it to the streaming algorithm $\mathcal{S}$. She then sends the memory state of $\mathcal{S}$ to Bob. For each $b_i = 1$, Bob generates an observation label pair $(i, -1)$ and feeds it to $\mathcal{S}$. Bob then sends the memory state of $\mathcal{S}$ back to Alice and continue the next pass of the streaming algorithm.

	If there is no index $i$ such that $a_i = b_i = 1$, then $\opt = 0$. We can choose $A = \{ i : a_i = 1\}$ and $B = \{ i : b_i = 1\}$. Then, $L(A,B) = 0$ since all observations in $A$ are labeled $+1$ and all observations in $B$ are labeled $-1$. Thus, $\opt = 0$.

	If there is an index $i$ such that $a_i = b_i = 1$, then there are two observations $(i, +1)$ and $(i, -1)$ generated by Alice and Bob respectively. Thus, at least one of them is misclassified. Therefore, $\opt > 0$.

	Therefore, any constant-pass streaming algorithm that distinguishes between the case of $\opt = 0$ and $\opt > 0$ requires at least $\Omega(N)$ space.
\end{proof}

\section{Massively Parallel Computation Algorithms} \label{sec:parallel}

In the massively parallel computation (MPC) model, we have $m$ samples distributed among $m^{1-\beta}$ machines. Each machine has memory $\tO{m^{\beta}}$. In each round, each machine can send and receive messages from all other machines. The complexity of an MPC algorithm is the number of rounds required to compute  the answer.

\paragraph{Regression.} We first adapt the algorithms in Theorem \ref{thm:main-regression} to the MPC model. We use the following utility procedure in our MPC algorithms whose proof is deferred to Appendix \ref{sec:omitted-proofs}.

\begin{lemma} \label{lem:mpc-utility}
	Suppose there are $m^{1-\beta}$ machines each of which has memory $\tO{m^\beta}$. Consider $k = O(m^\beta)$ candidate splits $j_1 < j_2 < \ldots < j_k$. We can compute $L(j_t)$, $\texttt{Error}_{left}(j_t)$, and $\texttt{Error}_{right}(j_t)$ for each $t \in [k]$ in $O(1)$ rounds.
\end{lemma}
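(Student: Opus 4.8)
The plan is to simulate the dynamic-programming computation used in the streaming proofs of Theorem~\ref{thm:main-regression} (2) and Theorem~\ref{thm:main-regression} (3) within a constant number of MPC rounds, exploiting the fact that there are only $k = O(m^\beta)$ candidate splits and each machine can hold $\tilde{O}(m^\beta)$ words. First, I would have each machine $M_s$, holding its local chunk of $\tilde{O}(m^{1-\beta}) \cdot (m^{\beta}) = \tilde{O}(m)$ — wait, more carefully: there are $m^{1-\beta}$ machines, each holding $\tilde{O}(m^\beta)$ samples, for $m$ total. Each machine locally computes, for every interval $(j_{t-1}, j_t]$ with $t \in [k]$, the partial sums
\begin{align*}
	a^{(s)}_t = \sum_{i \in M_s} [j_{t-1} < x_i \le j_t], &&
	b^{(s)}_t = \sum_{i \in M_s} [j_{t-1} < x_i \le j_t]\, y_i, &&
	c^{(s)}_t = \sum_{i \in M_s} [j_{t-1} < x_i \le j_t]\, y_i^2.
\end{align*}
Since the candidate splits are sorted, each of the $\tilde{O}(m^\beta)$ local samples falls into exactly one interval and can be placed by binary search, so this local step costs $\tilde{O}(m^\beta)$ time and the output per machine is $O(k) = O(m^\beta)$ words, which fits in memory.

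Next I would aggregate these partial sums across machines. Each machine sends its vector $(a^{(s)}_t, b^{(s)}_t, c^{(s)}_t)_{t\in[k]}$; the total data is $m^{1-\beta} \cdot O(m^\beta) = O(m)$ words, so a single designated machine cannot collect everything. Instead I would use the standard $O(1)$-round MPC primitive for summing $m^{1-\beta}$ vectors each of length $O(m^\beta)$: assign coordinate $t$ (equivalently, a bounded block of coordinates) to a responsible machine, route each machine's $t$-th entry there, and sum. Each responsible machine receives $m^{1-\beta}$ values per coordinate it owns and owns $O(m^\beta / m^{1-\beta}) = O(m^{2\beta - 1})$ coordinates, so its incoming load is $O(m^\beta)$, within budget (one can also do a two-level tree aggregation if $2\beta - 1 < 0$; either way it is $O(1)$ rounds). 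After this step the global interval aggregates $A_t, B_t, C_t$ are known, distributed across machines. Then broadcast/collect the $O(k)$ aggregates onto every machine (total $O(k \cdot m^{1-\beta}) = O(m)$ words moved, $O(1)$ rounds), and each machine runs the prefix-sum dynamic program from the proof of Theorem~\ref{thm:main-regression} (2): compute $\mu(j_t) = \frac{\sum_{l \le t} B_l}{\sum_{l \le t} A_l}$, $\gamma(j_t) = \frac{\sum_{l > t} B_l}{\sum_{l > t} A_l}$, and then
\[
	L(j_t) = \frac{1}{m}\Big( \sum_{l \le t} C_l + \mu(j_t)^2 \sum_{l \le t} A_l - 2\mu(j_t)\sum_{l \le t} B_l + \sum_{l > t} C_l + \gamma(j_t)^2 \sum_{l > t} A_l - 2\gamma(j_t)\sum_{l > t} B_l \Big),
\]
together with $\texttt{Error}_{left}(j_t) = \sum_{l\le t} C_l + \mu(j_t)^2 \sum_{l\le t} A_l - 2\mu(j_t)\sum_{l\le t} B_l$ and the analogous $\texttt{Error}_{right}(j_t)$.

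The main obstacle — and the only part that needs care rather than routine bookkeeping — is the communication routing: making sure that every machine's send volume and receive volume in each round stays $\tilde{O}(m^\beta)$ despite the total data being $\Theta(m)$. The naive "gather everything on one machine" fails, so I would emphasize the coordinate-partitioned aggregation (or a constant-depth aggregation tree, depth $O(1/\beta)$ worst case but $O(1)$ for constant $\beta$; since the lemma only claims $O(1)$ rounds for $k = O(m^\beta)$ this is fine, and for the $O(1/\beta)$-round theorem a depth-$O(1/\beta)$ tree is anyway acceptable). Everything else — local binary-search bucketing, the prefix-sum DP, correctness of the $L(j_t)$ formula — is inherited verbatim from the streaming proofs, so I would simply cite those derivations rather than re-deriving them.
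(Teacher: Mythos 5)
Your proof is correct, but it routes the communication differently from the paper, and the comparison is worth making explicit. The paper first \emph{sorts} the pairs $(x_i,y_i)$ by $x_i$ across the machines using the standard $O(1)$-round MPC sorting primitive, so that each machine holds a contiguous block of the data. Consequently each machine intersects only a contiguous run of the intervals $(j_{t-1},j_t]$, and any interval is shared by more than one machine only if it is a ``boundary'' interval of some machine; a counting argument then shows the total number of (machine, interval) pairs with a nonzero partial sum is $O(m^\beta)$ plus the number of machines, so \emph{all} partial sums can be shipped to a single central machine within its $\tO{m^\beta}$ memory, and that machine runs the prefix-sum DP alone. You skip the sort, let every machine produce a length-$k$ vector of partial sums, and replace the single-collector step with a coordinate-partitioned (or tree) reduction because the total volume is now $\Theta(m)$. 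That is a valid alternative, and you correctly identify the routing as the only delicate point; your load calculation for $\beta>1/2$ is right, and your hedge that the reduction tree has depth $O(1/\beta)$ when $\beta\le 1/2$ is honest --- note that the paper's sorting step carries exactly the same hidden $O(1/\beta)$ dependence, so neither approach is strictly cleaner on round count; both are $O(1)$ only for constant $\beta$, which is how the lemma is used. What the paper's sort buys is that the aggregation becomes trivial (one round to one machine, $\tO{m^\beta}$ total traffic); what your version buys is avoiding the sorting primitive at the cost of a genuinely distributed sum. The local bucketing, the prefix sums, and the closed-form expressions for $L(j_t)$, $\texttt{Error}_{left}(j_t)$, $\texttt{Error}_{right}(j_t)$ are identical in both arguments.
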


We state our main result for regression in the MPC model as a theorem below.

\begin{theorem}\label{mpc:regression}
	For regresion, in the MPC model, we have the following algorithms:
	\begin{enumerate}
		\item If the number of machines is $\sqrt{m}$ and each machine has memory $\tO{\sqrt{m}}$ and the label range $M=O(1)$, then there exists an MPC algorithm that computes a split  $j$ such that $L(j) \leq \opt + O(\frac{1}{\sqrt{m}})$ in $O(1)$ rounds.
		\item If the number of machines is $m^{1-\beta}$ and each machine has memory $\tO{1/\epsilon^2 \cdot m^\beta}$, then there exists an MPC algorithm that computes a split $j$ such that $L(j) \leq (1+\epsilon)\opt $ in $O(1/\beta)$ rounds.
	\end{enumerate}

\end{theorem}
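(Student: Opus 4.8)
The plan is to prove both parts of Theorem~\ref{mpc:regression} by simulating the corresponding streaming algorithms from Theorem~\ref{thm:main-regression} using the utility procedure in Lemma~\ref{lem:mpc-utility}, which lets us evaluate $L(j_t)$, $\texttt{Error}_{left}(j_t)$, and $\texttt{Error}_{right}(j_t)$ for a batch of $O(m^\beta)$ candidate splits in $O(1)$ rounds.

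\textbf{Part 1 (additive error with $\sqrt{m}$ machines).} Here $\beta = 1/2$, so each machine has memory $\tO{\sqrt{m}}$ and we have $\sqrt{m}$ machines. First I would adapt the candidate-set construction of Algorithm~\ref{alg:regression-additive}, but with the additive error parameter set to $\epsilon = 1/\sqrt{m}$. In the streaming setting this produced a set $S$ of $O(1/\epsilon) = O(\sqrt{m})$ candidate splits such that consecutive splits enclose at most $4\epsilon m = 4\sqrt{m}$ observations. The set $S$ itself fits in $\tO{\sqrt{m}}$ space, so a single designated machine can collect the per-machine range counts (each machine sends an $\tO{\sqrt{m}}$-size summary of how many of its points fall in each dyadic range, or more simply we aggregate the sampled points) and build $S$ in $O(1)$ rounds. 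Then I would invoke Lemma~\ref{lem:mpc-utility} with $k = |S| = O(\sqrt{m}) = O(m^\beta)$ to compute $L(j_t)$ for every $j_t \in S$ in $O(1)$ rounds, and return the minimizer. The approximation guarantee is exactly the one from the proof of Theorem~\ref{thm:main-regression}~(2): Lemma~\ref{lem:split-point} gives $L(j) \leq \opt + 5\epsilon M^2 = \opt + O(1/\sqrt{m})$ using $M = O(1)$.

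\textbf{Part 2 (multiplicative error with $m^{1-\beta}$ machines).} This mirrors the proof of Theorem~\ref{thm:main-regression}~(4). For each of the $\tO{1/\epsilon^2}$ guesses $(z_{left}, z_{right})$ of the left/right squared errors of the optimal split (rounded to powers of $1+\epsilon$ in $[0, mM^2]$), I would run the binary search of Algorithm~\ref{alg:regression-multiplicative}, but collapsing $r$ binary-search iterations into $O(1)$ rounds: after $r$ iterations the search has distinguished $2^{r+1}-1$ possible probe points partitioning $[N]$ into intervals of length $\approx N/2^r$, and by Lemma~\ref{lem:mpc-utility} we can evaluate $\texttt{Error}_{left}$ and $\texttt{Error}_{right}$ at all $O(2^r)$ of them in $O(1)$ rounds provided $2^r = O(m^\beta)$. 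Setting $2^r = m^\beta$ means each batch shrinks the search interval by a factor $m^\beta$, so after $t$ batches the interval has size $\approx N/m^{\beta t}$; since $N = O(m)$ (or $N = \poly(m)$) this needs $t = O(1/\beta)$ batches, hence $O(1/\beta)$ rounds total. The correctness of the Case~1/2/3 case analysis is inherited verbatim from the streaming proof via Lemma~\ref{lem:monotonicity}. The memory per machine is $\tO{m^\beta}$ for one guess; since there are $\tO{1/\epsilon^2}$ guesses run in parallel and we have $m^{1-\beta}$ machines, the total work distributes to $\tO{1/\epsilon^2 \cdot m^\beta}$ memory per machine, matching the claim.

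\textbf{Main obstacle.} The genuinely delicate point is not the high-level simulation but verifying that Lemma~\ref{lem:mpc-utility} applies cleanly in each use: I must check that the number of candidate splits stays $O(m^\beta)$ (so the lemma's hypothesis is met), that building the candidate set $S$ in Part~1 — which depends on sampled counts aggregated across machines — can itself be done in $O(1)$ rounds within the memory budget (the sampled set has size $\tO{\sqrt{m}}$, so this is fine but worth stating), and that running $\tO{1/\epsilon^2}$ guesses concurrently does not blow the per-machine memory beyond $\tO{1/\epsilon^2 \cdot m^\beta}$ — which follows since the guesses are independent and can be assigned to disjoint machine groups or time-multiplexed. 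A secondary subtlety is the dependence of the round count on the relationship between $N$ and $m$: I would assume $N = \poly(m)$ (consistent with the paper's convention that $\tO{}$ hides $\polylog$ factors in both), so that $\log N / \log(m^\beta) = O(1/\beta)$, and flag this explicitly.
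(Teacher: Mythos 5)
Your proposal is correct and follows essentially the same route as the paper: Part 1 aggregates the samples at a central machine, builds the candidate set of Algorithm~\ref{alg:regression-additive} with $\epsilon = 1/\sqrt{m}$, and evaluates all $O(\sqrt{m})$ candidates via Lemma~\ref{lem:mpc-utility}; Part 2 batches $r$ binary-search iterations per guess into $O(1)$ rounds by evaluating all $O(2^r)$ probe points with the same lemma. Your explicit handling of the $N$ versus $m$ issue (the paper sets $2^r = N^\beta$ while the memory bound is in terms of $m^\beta$) is a small but welcome clarification rather than a deviation.
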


\begin{proof}[Proof of Theorem \ref{mpc:regression}  (1)]
	Each machine independently samples each $x_i, y_i$ it holds with probability $p = \frac{C \log N}{\sqrt{m}}$ for some sufficiently large constant $C$. With high probability, every machine samples at most $\tO{1}$ observation and label pairs. These machines will send their samples to a central machine. Note that the central machine will receive $\tO{\sqrt{m}}$ samples with high probability; thus, this does not violate the memory constraint.

	The central machine will find the candidate splits $S=\{j_1, j_2, \ldots, j_k\}$ where $k = O(\sqrt{m})$ as in the second algorithm in Theorem \ref{thm:main-regression}. It will then send these candidate splits to all machines. Note that each machine receives a message of size $\tO{\sqrt{m}}$. Since we have $O(\sqrt{m})$ candidate splits and $\sqrt{m}$ machines, we can compute $L(j_t)$ for each $j_t$ in $O(1)$ rounds using Lemma \ref{lem:mpc-utility}.

	This in turns allows the central machine to find the split $j$ that minimizes $L(j)$ among $j_1, j_2, \ldots, j_k$ which guarantees that $L(j) \leq \opt + O(\frac{1}{\sqrt{m}})$ as argued in the proof of Theorem \ref{thm:main-regression}.
\end{proof}

We now prove the second part of Theorem \ref{mpc:regression}.

\begin{proof}[Proof of Theorem \ref{mpc:regression}  (2)]
	We can simulate the third algorithm in Theorem \ref{thm:main-regression} in the MPC model as follows. Recall that we try $O(1/\epsilon^2 \cdot \log^2 N)$ guesses $z_{left}, z_{right}$ for the left and right errors of the optimal split.
	Fix a guess, we run a binary search to find a feasible split $j$ or correctly determine that the guess is infeasible. We can simulate $r =\beta \log N$ iterations of the binary search by checking the feasibility of each of $2^{r+1}-1 = O(N^\beta)$ splits. Let these splits be $j_1 < j_2 < \ldots < j_{2^{r+1}-1}$. 
	From Lemma \ref{lem:mpc-utility}, we can compute the left and right errors $\texttt{Error}_{left}(j_t)$ and $\texttt{Error}_{right}(j_t)$ for each $j_t$ in $O(1)$ rounds which in turn allows us to determine the feasibility of each split with respect to this guess.
	
	Therefore, we are able to simulate $\beta \log N$ binary search iterations in $O(1)$ rounds. Hence, the total number of rounds is $O(1/\beta)$. Note that since each machine has $\tO{1/\epsilon^2 \cdot m^\beta}$ memory, we can run this algorithm for all guesses simulatenously.
\end{proof}

\paragraph{Classification.} The first algorithm in Theorem \ref{thm:main-classification} can easily be adapted to the MPC model with $\sqrt{m}$ machines. Each machine samples each $x_i, y_i$ that it holds with probability $p = \frac{C \log N}{\sqrt{m}}$ for some sufficiently large constant $C$ and sends its samples to a central machine. The central machine will then compute the optimal split $j$ that minimizes the misclassification rate based on the sampled observations and labels as in the first algorithm in Theorem \ref{thm:main-classification}. This requires 1 MPC-round. 

The third algorithm in Theorem \ref{thm:main-classification} can be simulated in 1 MPC-round as well given that we have $\sqrt{m}$ machines each with memory $\tO{\sqrt{m}}$. Each machine can independently sample each $x_i, y_i$ that it holds with probability $p = \frac{C \log N}{\sqrt{m}}$ and send its samples to a central machine. With high probability, the central machine will receive $\tO{\sqrt{m}}$ samples. The central machine then computes the best split $j$ that minimizes the loss function based on the Gini impurity based on the samples as in the third algorithm in Theorem \ref{thm:main-classification}; this is effectively equivalent to setting $\epsilon = 1/m^{1/4}$.

We can also simulate the third algorithm in Theorem \ref{thm:main-classification} in the MPC model.  The proofs are completely analogous to the regression case and are omitted.

\begin{theorem}
	For classification, in the MPC model, we have the following algorithms:
	\begin{enumerate}
		\item If the number of machines is $\sqrt{m}$ and each machine has memory $\tO{\sqrt{m}}$, then there exists an MPC algorithm that computes a split point $j$ such that $L(j) \leq \opt + \frac{1}{\sqrt{m}}$ in $1$ round.
		
		\item If the number of machines is $\sqrt{m}$ and each machine has memory $\tO{\sqrt{m}}$, then there exists an MPC algorithm that computes a split point $j$ such that $L(j) \leq \opt + \frac{1}{{m^{1/4}}}$ in $1$ round.
		
		\item If the number of machines is $m^{1-\beta}$ and each machine has memory $\tO{1/\epsilon^2 \cdot m^\beta}$, then there exists an MPC algorithm that computes a split point $j$ such that $L(j) \leq (1+\epsilon)\opt $ in $O(1/\beta)$ rounds.
	\end{enumerate}
	
\end{theorem}

\appendix

\section{Omitted Proofs}\label{sec:omitted-proofs}




\begin{theorem} [Chernoff bound] \label{thm:sampling}
	Suppose we sample each of $ x_1,x_2,\ldots,x_m $ independently with probability $ p = \frac{C \log N}{m^\alpha} $ (for $m^\alpha > C \log N$) for some sufficiently large constant $C$. Let $ k_{[a,b]} $ be the number of sampled elements in the range $ [a,b] $. Then the following holds with probability at least $ 1 - 1/N $ for all $ a, b \in [N] $:
	\begin{itemize}
		\item If $ f_{[a,b]} \geq  4 m^\alpha $, then $ \frac{k_{[a,b]}}{p} \geq 2m^\alpha$
		\item If $ f_{[a,b]} \leq m^\alpha/8 $, then $ \frac{k_{[a,b]}}{p} \leq m^\alpha $.
	\end{itemize}
\end{theorem}

\begin{proof}

	Let $ k_{[a,b]} $ be the number of sampled elements in the range $ [a,b] $. Then
	\[ k_{[a,b]} \sim \text{Binomial}(f_{[a,b]}, p) .\]

	Suppose $ f_{[a,b]} \geq m^\alpha/2 $; then, $ p f_{[a,b]} \geq  C/2 \cdot \log N $. Hence, for some sufficiently large constant $C $,
	\begin{align*}
		\prob{\frac{k_{[a,b]}}{p} < \frac{m^\alpha}{4}} & =  \prob{k_{[a,b]} < \frac{m^\alpha p}{4}} \leq \prob{k_{[a,b]} < \frac{f_{[a,b]}p}{2}} 
		\leq e^{- \frac{1/4}{2} f_{[a,b]} p} \leq e^{ C/32 \cdot  \log N} < \frac{1}{N^4}.
	\end{align*}

	In the last step, we apply the first inequality in the Chernoff bound with $ \epsilon = 1/2 $.

	On the other hand, suppose $ f_{[a,b]} \leq m^\alpha/8 $ (which implies $\frac{m^\alpha}{f_{[a,b]}} \geq 8$); then,

	\begin{align*}
		\prob{\frac{k_{[a,b]}}{p} > m^\alpha} & =  \prob{k_{[a,b]} > m^\alpha p}  = \prob{k_{[a,b]} > \left(1 + \frac{m^\alpha}{f_{[a,b]}} - 1\right) f_{[a,b]}p}                                                                             \\
		                                      & \leq \left( \frac{e^{m^\alpha/f_{[a,b]}- 1 }}{(m^\alpha/f_{[a,b]})^{(m^\alpha/f_{[a,b]})}} \right)^{f_{[a,b]}p} < \left( \frac{e}{m^\alpha/f_{[a,b]}} \right)^{f_{[a,b]}p m^\alpha/f_{[a,b]}} \\
		                                      & = \left( \frac{e}{m^\alpha/f_{[a,b]}} \right)^{p m^\alpha}< 2^{C \cdot \log N} < \frac{1}{N^4}.
	\end{align*}

	In the second inequality, we apply the second inequality of Chernoff bound with $\epsilon = m^\alpha/f_{[a,b]} - 1 $.

	By taking a union bound over all $N \choose 2$ choices of $ a, b \in [N] $, we have the desired result.
\end{proof}

\begin{proof} [Proof of Lemma \ref{lem:mpc-utility}]
	In MPC, we can sort $(x_i, y_i)$ based on $x_i$ in $O(1)$ rounds and store them on the machines $\mathcal{M}_1, \mathcal{M}_2, \ldots, \mathcal{M}_{m^\beta}$ in order \cite{ghaffari2019massively,GoodrichSZ11}. 

	Let $j_0 = 0$. Using the same argument as in the proof of Theorem \ref{thm:main-regression} (2), in order to compute $L(j_1), L(j_2),\ldots,L(j_k)$, it suffices to compute the following quantities:
	\begin{align*}
		A_t = \sum_{i=1}^m [j_t < x_i \leq j_{t+1}], &  & B_t = \sum_{i=1}^m [j_t < x_i \leq j_{t+1}] y_i, &  & C_t = \sum_{i=1}^m [j_t < x_i \leq j_{t+1}] y_i^2
	\end{align*}
	for each $j_t, j_{t+1}$. Each machine $\mathcal{M}_q$ computes
	\begin{align*}
		A_{t,q} = \sum_{x_i \in \mathcal{M}_q} [j_t < x_i \leq j_{t+1}], &  & B_{t,q} = \sum_{x_i \in \mathcal{M}_q} [j_t < x_i \leq j_{t+1}] y_i, &  & C_{t,q} = \sum_{x_i \in \mathcal{M}_q} [j_t < x_i \leq j_{t+1}] y_i^2
	\end{align*}

	and sends these quantities to the central machine. The central machine then computes $A_t, B_t, C_t$ by aggregating $A_{t,q}, B_{t,q}, C_{t,q}$ over $q$. We still need to argue that the central machine receives a combined message of size $\tO{m^\beta}$.

	We say that machine $\mathcal{M}_q$ is responsible for $j_t$ if it holds some $x_i$ such that $j_{t-1} < x_i \leq j_t$. It is possible that two or more machines may be responsible for the same $j_t$. Suppose machine $\mathcal{M}_q$ is responsible for $j_l < j_{l+1} < \ldots < j_{l'}$. Then, $j_{l+1}, j_{l+2}, \ldots, j_{l'-1}$ are only responsible for by $\mathcal{M}_q$ because the observations are stored in the machines in sorted order. We say $j_l$ and $j_{l'}$ are the boundary splits responsible by $\mathcal{M}_q$. Each machine has at most two boundary splits. Therefore, 
		
	\[
		\left| \left\{ t,q: \text{$j_t$ is a boundary split  responsible for by $\mathcal{M})_q$} \right\} \right| \leq 2 m^\beta.
	\]

	Because each non-boundary split is responsible for by at most one machine and there are $O(m^\beta)$ splits being considered, we have
	\[
		\left| \left\{ t,q: \text{$j_t$ is a non-boundary split  responsible for by $\mathcal{M}_q$} \right\} \right| \leq O(m^\beta).
	\]

	Hence, the total message size received by the central machine is $\tO{m^\beta}$. The central machine then computes  $A_t, B_t, C_t$ and computes $L(j_t)$ for each $j_t$. From the calculation in the proof of Theorem \ref{thm:main-regression} (2), we can also compute $\texttt{Error}_{left}(j_t)$ and $\texttt{Error}_{right}(j_t)$ since
	\begin{align*}
		\texttt{Error}_{left}(j_t) & =   \sum_{i = 1}^m [ x_i \leq j_{t}] (y_i - \mu(j_t))^2 = \left( \sum_{l \leq t - 1} C_l  \right)+   \mu(j_t)^2 \left( \sum_{l \leq t - 1} A_l \right) - 2 \mu(j_t)  \left( \sum_{l \leq t - 1}  B_l \right), \\
		\texttt{Error}_{right}(j_t) &=   \sum_{i = 1}^m [ x_i > j_{t}] (y_i - \mu(j_t))^2 = \left( \sum_{l \geq t} C_l  \right)+   \mu(j_t)^2 \left( \sum_{l \geq t} A_l \right) - 2 \mu(j_t)  \left( \sum_{l \geq t}  B_l \right). \qedhere
	\end{align*}
\end{proof}

\bibliography{references}
\bibliographystyle{plain}
\end{document}